\documentclass{amsart}

\usepackage{graphicx}
\usepackage{stmaryrd}
\usepackage{amssymb}
\usepackage{amsmath}
\usepackage{amsthm}
\usepackage{dsfont}
\usepackage{hyperref}
\usepackage{amsfonts}
\usepackage{amsrefs}

\newtheorem{theorem}{Theorem}[section]
\newtheorem{lemma}[theorem]{Lemma}

\newtheorem{corollary}[theorem]{Corollary}
\theoremstyle{definition}
\numberwithin{equation}{section}

\newcommand{\ES}{E}
\newcommand{\ue}{e}
\newcommand{\ug}{g}
\newcommand{\ve}{\vec{e}}
\newcommand{\vg}{\vec{g}}
\newcommand{\ugamma}{{\gamma}}
\newcommand{\vgamma}{\vec{\gamma}}
\newcommand{\fE}{\mathcal{E}}
\newcommand{\dG}{\vec{\mathcal{G}}}
\newcommand{\Out}{\ensuremath{\text{Out}}}
\newcommand{\In}{\text{In}}
\newcommand{\KW}{T}
\newcommand{\Proj}{\text{Proj}}
\newcommand{\Id}{\mathrm{Id}}	
\newcommand{\Walk}{\omega}								
\newcommand{\WalkW}{\mathcal{W}}
\newcommand{\WalkU}{{\mathcal{U}}}
\newcommand{\WalkV}{\mathcal{V}}
\newcommand{\C}{\mathds{C}}
\newcommand{\R}{\mathds{R}}
\newcommand{\Z}{\mathds{Z}}
\newcommand{\LL}{\mathcal{L}}
\newcommand{\XX}{\mathcal{X}}
\newcommand{\CI}{S}
\newcommand{\Sf}{\varphi}
\newcommand{\FinG}{\mathcal{G}}
\newcommand{\FinH}{\mathcal{H}}
\newcommand{\InfG}{\Gamma}
\newcommand{\dx}{\vec{x}}

\newcommand{\dist}{d}

\begin{document}
\date{June 10, 2013}
\title[Fermionic observable and inverse Kac-Ward operator]
	{The fermionic observable in the Ising model and the inverse Kac-Ward operator}
\author{Marcin Lis}
\address{VU University\\
Department of Mathematics\\
De Boelelaan 1081a\\
1081\,HV Amsterdam\\
The Netherlands}
\email{m.lis\,@\,vu.nl}
\begin{abstract}
We show that the critical Kac-Ward operator on isoradial graphs acts in a certain sense as the operator of s-holomorphicity, 
and we identify the fermionic observable for the spin Ising model as the inverse of this operator.
This result is partially a consequence of a more general observation that the inverse Kac-Ward operator on any planar graph 
is given by what we call a fermionic generating function. Furthermore, using bounds obtained in \cite{Lis} for the spectral radius and 
operator norm of the Kac-Ward transition matrix, we provide a general picture of the non-backtracking walk representation of the critical 
and supercritical inverse Kac-Ward operators on isoradial graphs.
\end{abstract}

\keywords{Inverse Kac-Ward operator, s-holomorphicity, fermionic observable}
\subjclass{82B20, 60C05}

\maketitle

\section*{Introduction} 
The discrete fermionic observable for the FK-Ising model on the square lattice was introduced by Smirnov in~\cite{Smir2} (although, as 
mentioned in~\cite{ChelkSmir}, similar objects appeared in earlier works). He proved in~\cite{Smir1} that its scaling limit at 
criticality is given by the solution to a Riemann-Hilbert boundary value problem, and therefore is conformally covariant. A generalization 
of this result to Ising models defined on a large class of isoradial graphs was obtained by Chelkak and Smirnov in~\cite{ChelkSmir}, 
yielding also universality of the scaling limit. 

Since then, several different types of observables have been proposed for both the random cluster and classical spin Ising model.
They were used to prove conformal invariance of important quantities in these models. The scaling limit of the energy density of the critical 
spin Ising model on the square lattice was computed by Hongler and Smirnov~\cite{HongSmir}. Existence and conformal invariance of the scaling 
limits of the magnetization and multi-point spin correlations were established by Chelkak, Hongler and Izyurov~\cite{CHI}. The observable also 
proved useful in the off-critical regime and was employed by Beffara and Duminil-Copin~\cite{BefDum} to give a new proof of criticality of the 
self-dual point and to calculate the correlation length in the Ising model on the square lattice. In a more recent work of Hongler, Kyt\"{o}l\"{a} 
and Zahabi~\cite{HKZ}, the fermionic observables were identified as correlation functions of fermion operators in the transfer matrix formalism for 
the same model. One also has to mention the relation between the fermionic observable and the inverse Kasteleyn operator which was pointed out by 
Dub{\'e}dat~\cite{Dubedat}.

In this paper, we establish a direct connection between the fermionic observable for the spin Ising model and the inverse Kac-Ward operator. 
The method of Kac and Ward~\cite{KacWard} is a way of expressing the square of the partition function of the Ising model on a planar graph 
as the determinant of the Kac-Ward operator. It was proposed as a combinatorial alternative to the purely algebraic approach developed by Onsager 
and Kaufman~\cites{Onsager, Kaufman}. The Kac-Ward method and the Kac-Ward operator itself have recently become an object of revived interest. 
A thorough treatment of this approach with a focus on the combinatorics of configurations of loops was presented by Kager, Meester and the author in \cite{KLM}. 
The method was used there to rederive the critical point of the Ising model on the square lattice and to obtain new expressions for the 
free energy density and spin correlation functions in terms of signed loops and non-backtracking walks in the graph.
The same formulas for finite graphs were independently obtained by Helmuth in~\cite{Helm}, where the Kac-Ward method was put into a more 
general combinatorial context of heaps of pieces. Also in~\cite{Helm}, the spinor fermionic observable from~\cites{CI, CHI} was explicitly 
identified in terms of non-backtracking walks, though without addressing the issues of convergence of the expansions.  
In~\cite{Lis} the author, following the ideas contained in~\cite{KLM}, obtained bounds on the spectral radius and operator norm 
of the Kac-Ward transition matrix on a general graph and proved criticality of the self-dual Z-invariant Ising model introduced by Baxter~\cite{Baxter}.
Other relevant examples are the extension of the Kac-Ward method to graphs of higher genus introduced by Cimasoni~\cite{Cimasoni} and the 
computation of the critical temperature of the Ising model on doubly periodic planar graphs performed by Cimasoni and Duminil-Copin~\cite{CimDum}.

This paper consists of three sections. In Section~\ref{sec:genfunctions} we define the Kac-Ward operator on a general graph in the plane.
We then describe properties of the complex weights induced by this operator on the non-backtracking walks in the graph. In the end, we use loop expansions of
the even subgraph generating function from \cite{KLM} to express the inverse Kac-Ward operator on a finite graph
 in terms of a weighted sum over a certain family of subgraphs. 
We call the resulting formula the fermionic generating function since it bears a strong resemblance to the definitions of the spin fermionic 
observables from~\cites{ChelkSmir,HongSmir,HKZ}. In Section~\ref{sec:isoradial} we work on isoradial graphs. 
First, we consider the Kac-Ward operator corresponding to the 
critical Ising model, and we show that it can be thought of as the operator of s-holomorphicity. 
Subsequently, using bounds from~\cite{Lis}, we show that in finite volume the inverted critical Kac-Ward operator 
admits a representation in terms of non-backtracking walks, whereas a continuous inverse in infinite volume does not exist.
We also consider the supercritical inverse operators. They too are expressed in terms of walks (both on finite and infinite graphs),
and moreover, the associated Green's function decays exponentially fast with the distance between two edges. 
In particular, the supercritical operator on the full isoradial graph has a continuous inverse.

As a remark, we would like to point out that our observations seem to fit into
 a more general picture of two-dimensional discrete physical models satisfying the following three conditions:
\begin{itemize}
\item[\emph{(i)}] the partition function of the model is equal to the square root of the determinant of some operator,
\item[\emph{(ii)}] an important observable in the model is given by the inverse of this operator,
\item[\emph{(iii)}] the critical values of parameters of the model coincide with the values of parameters which make this operator into some (massless) discrete differential operator.
\end{itemize}
Our results show that the Ising model on isoradial graphs satisfies this classification with the distinguished operator being 
the Kac-Ward operator, the observable being the fermionic observable, and the discrete differential operator being the s-holomorphic operator.
 Another example is the discrete Gaussian free field, where the partition function is equal 
to the square root of the determinant of the discrete Laplacian, and the two-point spin correlation functions are given by the inverse of the 
Laplacian. Moreover, the general picture of the non-backtracking walk representation of the inverse Kac-Ward operator presented in Section~\ref{sec:isoradial} 
matches the one of the random walk representation of the inverse Laplacian \cite{BFS}.
Also the dimer model~\cite{Kasteleyn}, which is known to be closely related to the Ising model, fits this pattern. The square of 
the partition sum of this model is equal to the absolute value of the determinant of the Kasteleyn operator, which acts as the discrete Dirac 
operator (see e.g.~\cite{Ken1}). In addition, the observable of main interest in the work of Kenyon~\cite{Ken1} is the coupling function defined as the inverse of the Kasteleyn operator.

\section{The Kac-Ward operator and graph generating functions}
\label{sec:genfunctions}
We will consider graphs embedded in the complex plane. To simplify our notation, we will identify graphs with their edge sets. By an (\emph{undirected})
 \emph{edge} we mean an unordered pair of distinct complex numbers $\{z ,w\}$, which we will identify with the closed straight line segment in the complex plane 
connecting $z$ and $w$. We say that $z$ and $w$ are the \emph{endpoints} of $\{z,w\}$. A collection of edges $\FinG$ is called a \emph{graph} if any two edges 
in $\FinG$ share at most one point, which is either an endpoint of both of them, or is not an endpoint of any edge in $\FinG$.
In the latter case we say that the two edges \emph{cross} each other. A complex number $z$ is called a \emph{vertex} in $\FinG$ if it is an endpoint of some edge in $\FinG$.

Although a graph $\FinG$ is undirected, we will mainly work with the \emph{directed edges} of $\FinG$, i.e.\ the ordered pairs of complex numbers $(z,w)$ for which $\{z,w\} \in \FinG$. 
We write $\dG$ for the set of all directed edges of $\FinG$. We will always denote a directed edge by a letter with an arrow over it, whereas the undirected counterpart will be obtained 
by dropping the arrow from the notation, i.e.\ if $\ve=(z,w)$, then $\ue=\{z,w\}$. For a directed edge $\ve=(z,w)$, we define its \emph{tail} $t(\ve)=z$ and its \emph{head} $h(\ve) =w$, 
and we write $-\ve$ for the inverted edge $(w,z)$.

\subsection{The Kac-Ward operator} 
From now on, we assume that $\FinG$ is a fixed (possibly infinite) graph with finite maximal degree, and $x=(x_{\ue})_{\ue\in \FinG}$ is a system of positive edge 
weights satisfying $\| x \|_{\infty}=\sup_{\ue \in \FinG} x_{\ue} < \infty$.
For two directed edges $\ve$ and $\vg$, let
\begin{align} \label{eq:defangle}
\angle(\ve,\vg)= \text{Arg}\Big(\frac{h(\vg)-t(\vg)}{h(\ve)-t(\ve)}\Big) \in (-\pi,\pi]
\end{align}
be the \emph{turning angle} from  $\ve$ to $\vg$. 
The \emph{(Kac-Ward) transition matrix} is given by 
\begin{align} \label{eq:transitionmatrix}
\Lambda_{\ve,\vg} = \begin{cases}
		x_{\ue} e^{\frac{i}{2}\angle(\ve,\vg)}
		& \text{if $h(\ve) = t(\vg)$ and $\vg \neq -\ve$}; \\
		0 & \text{otherwise}.
	\end{cases}
\end{align}
The \emph{Kac-Ward operator} is an automorphism of the complex vector space~$\C^{\dG}$ defined via matrix multiplication by the matrix
\begin{align*}
\KW = \Id -\Lambda,
\end{align*}
where $\Id$ is the identity matrix. 
Note that this is well defined since $T$ has at most $\Delta$ nonzero entries in each row, where $\Delta$ is the maximal 
degree of~$\FinG$. Moreover, since the weight system is bounded, $\KW$ is continuous (bounded) when treated as an operator on $\ell^2(\dG)$.
It is known, that if $\FinG$ is finite and no two edges in $\FinG$ cross each other, 
then the determinant of $T$ is strictly larger than one (see~Theorem \ref{thm:main}).
In particular, in this case $T$ is an isomorphism.

\subsection{Weighted non-backtracking walks} \label{sec:walks}
A \emph{(non-backtracking) walk} $\Walk$ of \emph{length} $|\Walk|=n$ in $\FinG$ is a sequence of directed edges $\Walk=(\ve_0,\ve_1,\ldots,\ve_{n}) \in \dG^{n+1}$,
 such that $h(\ve_i)=t(\ve_{i+1})$ and $\ve_{i+1} \neq -\ve_{i}$ for $i=0,\ldots,n-1$. Note that $|\Walk|$ counts the number of steps $\Walk$ makes between edges, 
rather than the number of edges it visits.  A walk $\Walk$ is \emph{closed} if $\ve_0=\ve_n$ and $|\Walk|>0$. We say that $\Walk$ \emph{goes through} a directed
 edge $\ve$ (undirected edge~$\ue$) if $\ve_i = \ve$ ($\ue_i = \ue$), for some $i\in \{ 0,\ldots,n-1 \}$. Note that $\Walk$ does not
necessarily go through~$\ve_n$, and in particular, walks of length zero do not go through any edge. 
By $\ES(\Walk)$ we denote the \emph{edge set} of $\Walk$, i.e.\ the set of undirected edges that the walk $\Walk$ goes through. A walk is called a \emph{path}
 if it goes through every undirected edge at most once. By $-\Walk$ we mean the reversed walk $(-\ve_n,-\ve_{n-1},\ldots,-\ve_{0})$.

The \emph{(signed) weight} of a walk $\Walk=(\ve_0,\ve_1,\ldots,\ve_{n})$ is given by
\begin{align} \label{eq:defweight}
w(\Walk) = e^{\frac{i}{2} \alpha(\Walk)}  \prod_{i=0}^{n-1} x_{\ue_i},\qquad \text{where} \qquad \alpha(\Walk) = \sum_{i=0}^{n-1} \angle(\ve_{i},\ve_{i+1}) 
\end{align}
 is the total turning angle of $\Walk$. Note that with this definition of the signed weight, the last edge of $\Walk$ is not counted in terms 
of edge weights but does contribute to the total winding angle of $\Walk$. In particular, $w(\Walk)$ is a monomial in the 
variables $x_{\ue}$, $\ue \in \ES(\Walk)$. If $|\Walk|=0$, then we put 
$\alpha(\Walk)=0$ and $w(\Walk)=1$. The fundamental feature of the signed weight is that it factorizes over the steps that a path makes, 
 where the step weight is given by the transition matrix \eqref{eq:transitionmatrix}, i.e.
\begin{align} \label{eq:stepfact}
w(\Walk) = \prod_{i=0}^{n-1} \Lambda_{\ve_{i},\ve_{i+1}}.
\end{align} 

Given two directed edges $\ve$ and $\vg$, we write $\WalkW(\ve,\vg)$ for the collection of all walks in $\FinG$ which start at $\ve$ and end at $\vg$. 
Since the complex argument satisfies the logarithmic identity $\text{Arg}(z/w) =\text{Arg}(z)-\text{Arg}(w) \text{ (mod $2\pi $)}$, we conclude that
\begin{align}
		w(\Walk) \in e^{\frac{i}{2}\angle(\ve,\vg)} \R \qquad  \text{for } \Walk \in  \WalkW(\ve,\vg).
\end{align}
On the other hand, since walks are non-backtracking and $\text{Arg}(1/z)=-\text{Arg}(z)$ for $z \notin (-\infty,0]$, it follows that $\alpha(\Walk)= -\alpha(-\Walk)$.
Combining these two facts, we obtain that
\begin{align} \label{eq:inverted}
w(\Walk) = \begin{cases}
			- w(-\Walk) & \text{ if } \Walk \in  \WalkW(\ve,-\ve); \\
		 \ \ w(-\Walk) & \text{ if } \Walk \in  \WalkW(\ve,\ve).
	\end{cases}
\end{align}

The first identity in \eqref{eq:inverted} implies cancellations of weights of walks which go through certain edges in both directions.
The most basic consequence of this property is the following lemma:
\begin{lemma} \label{lem:zerosuminverse} For any $\ve \in \dG$,
\begin{align*}
		\sum_{\Walk \in \WalkW(\ve,-\ve)} w(\Walk) =0.
\end{align*}
\end{lemma}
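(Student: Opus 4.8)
The plan is to realise the weighted sum as a sum over cancelling pairs produced by the reversal involution $\Walk \mapsto -\Walk$. Concretely, I would show that reversal restricts to a fixed-point-free involution of the set $\WalkW(\ve,-\ve)$, and that by the first line of \eqref{eq:inverted} it negates the weight; grouping each walk with its reverse then forces every pair to contribute zero.

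First I would check that reversal maps $\WalkW(\ve,-\ve)$ into itself. If $\Walk=(\ve_0,\dots,\ve_n)\in\WalkW(\ve,-\ve)$, then $-\Walk=(-\ve_n,\dots,-\ve_0)$ is again a walk, since the non-backtracking condition $\ve_{i+1}\neq-\ve_i$ is symmetric under reversal; moreover its first edge is $-\ve_n=-(-\ve)=\ve$ and its last edge is $-\ve_0=-\ve$, so $-\Walk\in\WalkW(\ve,-\ve)$. As $-(-\Walk)=\Walk$, reversal is an involution on this set. Next I would verify it has no fixed point. A walk fixed by reversal would satisfy $\ve_i=-\ve_{n-i}$ for all $i$; if $n=2m$ is even this forces $\ve_m=-\ve_m$, impossible because an undirected edge joins two distinct points, and if $n=2m+1$ is odd it forces $\ve_{m+1}=-\ve_m$, contradicting the non-backtracking requirement at $i=m$. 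Hence $\WalkW(\ve,-\ve)$ is partitioned into two-element orbits $\{\Walk,-\Walk\}$.

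Finally, on each orbit the first case of \eqref{eq:inverted} gives $w(\Walk)+w(-\Walk)=0$, so summing over orbits yields the stated identity. When $\FinG$ is finite, or more generally when the series converges absolutely (for instance under the spectral-radius bounds from \cite{Lis} in the regimes where the lemma is later applied), this grouping into pairs is a legitimate rearrangement and the total vanishes. The only points needing any care are the fixed-point analysis above (the parity split) and, in infinite volume, the justification of the rearrangement, which is the mildest of obstacles and does not arise in the settings where the lemma is used.
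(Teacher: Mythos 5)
Your proof is correct and follows essentially the same route as the paper: pair each walk $\Walk\in\WalkW(\ve,-\ve)$ with its reversal $-\Walk$, which stays in $\WalkW(\ve,-\ve)$, and cancel via the first case of \eqref{eq:inverted}. Your additional checks (fixed-point-freeness of the involution and the absolute-convergence caveat) are sound refinements of details the paper leaves implicit.
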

\begin{proof}
If $\WalkW(\ve,-\ve)$ is empty then the above statement is trivially true.
Otherwise, if $\Walk \in \WalkW(\ve,-\ve)$, then $-\Walk \in \WalkW(\ve,-\ve)$, $w(\Walk)=-w(-\Walk)$, and $-(-\Walk)=\Walk$. Hence, we have cancellation of all terms in the series.
\end{proof}
This observation, and other which naturally follow from property \eqref{eq:inverted} (see Lemma \ref{lem:sumsloops} 
and \ref{lem:walkfact} in Section \ref{sec:proofs}) will be important in the computation of the inverse of the Kac-Ward operator. 

Note that the above sum is in general an infinite power series in the variables~$x_{\ue}$. In order to be rigorous when dealing with power series, 
we will always assume, unless stated otherwise, that $\|x\|_{\infty}$ is sufficiently small for the series to be absolutely convergent. 
In all of the cases, it will be enough to take $\|x\|_{\infty} < 1/(\Delta-1)$.

\subsection{Crossings in graphs and walks}
As mentioned before, edges of a graph can cross each other. For a finite graph $\FinH$, 
let $C(\FinH)$ be the number of \emph{edge crossings} in $\FinH$, i.e.\
the number of unordered pairs of edges in $\FinH$ which cross each other.

A similar notion of a crossing can be assigned to closed walks. 
One can think of a closed walk $\Walk=(\ve_0,\ldots,\ve_n)$ as a closed continuous curve in the complex plane with the time parametrization given by
\begin{align*}
\Walk(s) = t(\ve_{\lfloor s \rfloor})+ (s -\lfloor s \rfloor)\big(h(\ve_{\lfloor s \rfloor})-t(\ve_{\lfloor s \rfloor})\big) \qquad \text{for} \ s \in \R / n\Z.
\end{align*}
With this definition, we say that $(s_1,s_2) \in (\R / |\Walk_1|\Z) \times (\R / |\Walk_2|\Z)$ is a \emph{ crossing} at $z$ between two closed walks 
$\Walk_1$ and $\Walk_2$, if $z = \Walk_1(s_1) = \Walk_2(s_2)$, and for any small 
open neighborhood $U_1 \times U_2$ of $(s_1,s_2)$,  there is a small ball $B$ around $z$, 
such that $\Walk_1(U_1)$ intersects both connected components of $B \setminus \Walk_2(U_2)$. 
Note that in our setting, the only possible crossings between closed walks can occur at the vertices 
of $\FinG$, or at the points of crossings of edges of $\FinG$. 
We write $C(\Walk_1,\Walk_2)$ for the number of crossings between $\Walk_1$ and~$\Walk_2$, 
and $C(\Walk) = C(\Walk,\Walk)/2$ for the number of \emph{self-crossings} of $\Walk$.

We say that two walks are \emph{edge-disjoint} if their edge sets are disjoint.
For topological reasons, if $\Walk_1$ and $\Walk_2$ are edge-disjoint closed paths, then $C(\Walk_1,\Walk_2)$ is even. Moreover, there is an intrinsic connection 
between the total turning angle of a closed path and the number of its self-crossings:
\begin{theorem}[Whitney, \cite{Whitney}]\label{thm:Whitney} For any closed path $\Walk$,
\begin{align*}
-e^{\frac{i}{2} \alpha(\Walk)} = (-1)^{C(\Walk)}.
\end{align*}
\end{theorem}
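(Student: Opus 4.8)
The plan is to prove the identity $-e^{\frac{i}{2}\alpha(\Walk)} = (-1)^{C(\Walk)}$ by induction on the number of self-crossings $C(\Walk)$, using the fact that $\alpha(\Walk)$ is the total turning angle of a closed curve in the plane. The key geometric input is that the total turning angle of any closed \emph{generic} smooth curve is $2\pi k$ for some integer $k$ (the turning number, or rotation index), so that $\frac{1}{2}\alpha(\Walk) \equiv \pi k \pmod{2\pi}$ and hence $e^{\frac{i}{2}\alpha(\Walk)} = (-1)^k$. Thus the statement is equivalent to showing $k \equiv C(\Walk) + 1 \pmod 2$, i.e. that the turning number of a closed path has parity opposite to its number of self-crossings.

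First I would set up the reduction from the piecewise-linear walk $\Walk$ to a smooth curve: since $\Walk$ is a closed path (so it traverses each undirected edge at most once), one can perturb it slightly near each vertex and near each edge-crossing to a smooth immersed curve with the same turning number modulo $2\pi$ and the same number of transverse self-intersections as $C(\Walk)$; the non-backtracking condition guarantees that at each vertex the incoming and outgoing directions differ, so the turning angle $\angle(\ve_i,\ve_{i+1}) \in (-\pi,\pi)$ is well-defined and the smoothing is unobstructed. Next, I would establish the base case: a closed path with no self-crossings is a simple closed curve, which by the theorem on turning tangents (Hopf's Umlaufsatz) has turning number $\pm 1$, giving $e^{\frac{i}{2}\alpha(\Walk)} = -1$, as required. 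For the inductive step, given a path with $C(\Walk) \geq 1$ self-crossings, pick one self-crossing point and perform the standard smoothing/surgery that resolves it: this splits $\Walk$ into two closed curves, or modifies it into one closed curve with one fewer self-crossing, while changing the total turning number by $\pm 1$ (resolving a crossing removes one full or half rotation in a controlled way). Tracking how $C$ and $k$ each change by the surgery, and applying the inductive hypothesis, yields the parity identity.

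The main obstacle I expect is making the surgery step fully rigorous: one must verify that resolving a transverse double point changes the turning number by exactly $\pm 1$ and the self-crossing count by exactly $-1$ (possibly distributed between two components when the surgery disconnects the curve), so that the parity bookkeeping closes up. This requires a careful local analysis at the crossing — distinguishing the two ways of smoothing a double point and choosing the one compatible with orientation — together with the additivity of turning number and of crossing number under disjoint union, plus the already-noted fact that two edge-disjoint closed paths cross an even number of times (so their total contribution to parity is accounted for correctly). An alternative, perhaps cleaner, route would be to cite Whitney's original combinatorial argument directly and instead focus the written proof on the reduction from piecewise-linear walks to the smooth setting, which is the only part genuinely specific to our framework; given that the statement is attributed to Whitney, this is likely the intended level of detail.
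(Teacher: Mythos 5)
The paper offers no proof of this statement at all---it is quoted as Whitney's theorem with a citation to \cite{Whitney}---so there is nothing internal to compare your argument against; what matters is whether your sketch is sound on its own terms. The overall architecture is right and is the standard one: smooth the piecewise-linear closed path to a generic immersed curve (the non-backtracking condition ensures each turning angle lies in $(-\pi,\pi)$, so the tangent direction can be interpolated without cusps and the total turning is $2\pi k$ for an integer $k$), handle $C(\Walk)=0$ by Hopf's Umlaufsatz, and induct on the number of double points by resolving one of them.

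The one step you describe incorrectly is the effect of the surgery on the turning number. The orientation-compatible smoothing of a transverse double point always splits the closed curve into \emph{two} closed curves (never ``one curve with one fewer crossing''), and it does not change the turning number by $\pm 1$: at the resolved point one new arc turns by some $\theta\in(-\pi,\pi)$ and the other by $-\theta$, so the total turning angle is exactly preserved and $k = k_1 + k_2$. The correct bookkeeping is then on the crossing side: the resolved double point disappears, the remaining crossings split as $C(\Walk) - 1 = C(\Walk_1) + C(\Walk_2) + C(\Walk_1,\Walk_2)$, and $C(\Walk_1,\Walk_2)$ is even (the mod-$2$ intersection number of two closed curves in the plane vanishes---the fact the paper records for edge-disjoint closed paths). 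Applying the inductive hypothesis $k_i \equiv C(\Walk_i)+1 \pmod 2$ to each piece, which has strictly fewer self-crossings, gives $k = k_1+k_2 \equiv C(\Walk_1)+C(\Walk_2) \equiv C(\Walk)+1 \pmod 2$, which is the claim. With that correction your induction closes; since the author treats the result as a black-box citation, your suggestion to prove only the piecewise-linear-to-smooth reduction and cite Whitney for the rest is also a perfectly defensible level of detail.
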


\subsection{Generating functions of even subgraphs}
We call a graph $\FinH$ \emph{even} if all its vertices have even degree. Equivalently, a finite graph $\FinH$ is even if and only if it is 
a union of edge sets of some collection of edge-disjoint closed paths. The \emph{generating function of even subgraphs} 
of a finite graph $\FinG$, as defined in~\cite{KLM}, is given by
\begin{align} \label{eq:defZ}
Z = \mathop{\sum_{\FinH \subset \FinG}}_{\FinH \text{ even }} (-1)^{C(\FinH)}\prod_{\ue \in \FinH} x_{\ue}.
\end{align}
The empty set is also even, and we assume that its contribution to $Z$ equals one. If $\FinG$ has no edge crossings, then $Z$ counts all even subgraphs
 with positive sign, and in particular is bigger than one. If additionally $\|x\|_{\infty} \leq 1$, then by the \emph{high-temperature expansion}, $Z$ is proportional 
to the partition function of the Ising model on $\FinG$ with free boundary conditions and with appropriate coupling constants (see e.g. \cite{KLM}).
However, it will be crucial for the computation of the inverse Kac-Ward operator to allow graphs with crossings, and 
therefore we will need the following result:
\begin{theorem}{\cite{KLM}*{Theorem 1.9}} \label{thm:main}
\begin{align*}
Z = \exp\Big( -\sum_{\Walk \text{ closed} } \frac{w(\Walk )}{2|\Walk|}\Big) = \sqrt{{\det}\KW},
\end{align*}
where the sum is over all closed walks in $\FinG$. 
\end{theorem}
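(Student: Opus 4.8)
The plan is to prove the two equalities separately: first $\exp\big(-\sum_\Walk w(\Walk)/(2|\Walk|)\big)=\sqrt{\det\KW}$ by a trace--logarithm computation, and then $Z=\exp\big(-\sum_\Walk w(\Walk)/(2|\Walk|)\big)$ by expanding the exponential into a signed sum over loop configurations and invoking Whitney's theorem. Throughout I work under the standing assumption that $\FinG$ is finite and $\|x\|_\infty$ is small (say $\|x\|_\infty<1/(\Delta-1)$), so that every series and infinite product below converges absolutely.

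For the right-hand equality: since $\FinG$ is finite, $\KW=\Id-\Lambda$ is a finite matrix of spectral radius $<1$, so $\log\det\KW=\mathrm{tr}\log(\Id-\Lambda)=-\sum_{k\ge1}\tfrac1k\,\mathrm{tr}(\Lambda^{k})$. I would then observe that a summand of $\mathrm{tr}(\Lambda^{k})=\sum_{\ve}\sum_{\ve_1,\dots,\ve_{k-1}}\Lambda_{\ve,\ve_1}\Lambda_{\ve_1,\ve_2}\cdots\Lambda_{\ve_{k-1},\ve}$ is nonzero precisely when $(\ve,\ve_1,\dots,\ve_{k-1},\ve)$ is a non-backtracking closed walk of length $k$ --- the constraint $\vg\neq-\ve$ being built into the definition of $\Lambda$ --- and that by the step-factorization \eqref{eq:stepfact} such a summand equals $w(\Walk)$. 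Hence $\mathrm{tr}(\Lambda^{k})=\sum_{\Walk\text{ closed},\,|\Walk|=k}w(\Walk)$ and $\log\det\KW=-\sum_{\Walk\text{ closed}}w(\Walk)/|\Walk|$; since $\det\KW$ and the exponential are both close to $1$, taking the principal square root gives $\sqrt{\det\KW}=\exp\big(-\tfrac12\sum_\Walk w(\Walk)/|\Walk|\big)=\exp\big(-\sum_\Walk w(\Walk)/(2|\Walk|)\big)$.

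For the left-hand equality I would first pass from walks to loops. Call a closed walk \emph{primitive} if it is not a proper power, and a \emph{loop} an equivalence class of primitive closed walks under cyclic rotation and reversal. The total turning angle and the edge-weight product of a closed walk are invariant under cyclic rotation and --- by the $\WalkW(\ve,\ve)$ case of \eqref{eq:inverted} --- under reversal, so $w$ descends to loops; moreover $\alpha(\ell^{p})=p\,\alpha(\ell)$, whence $w(\ell^{p})=w(\ell)^{p}$. Since every closed walk is a $p$-fold iterate ($p\ge1$) of a unique primitive closed walk, and each loop of length $n$ is represented by $2n$ primitive closed walks, a routine accounting of the symmetry factor $2|\Walk|$ yields
\[
\sum_{\Walk\text{ closed}}\frac{w(\Walk)}{2|\Walk|}=\sum_{\ell}\sum_{p\ge1}\frac{w(\ell)^{p}}{p}=-\sum_{\ell}\log\big(1-w(\ell)\big),
\]
so that $\exp\big(-\sum_\Walk w(\Walk)/(2|\Walk|)\big)=\prod_{\ell}\big(1-w(\ell)\big)=\sum_{S}(-1)^{|S|}\prod_{\ell\in S}w(\ell)$, the last sum being over finite sets $S$ of loops. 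A set $S$ contributes to the squarefree monomial $\prod_{\ue\in\FinH}x_{\ue}$ (for $\FinH$ even) only if the loops in $S$ are pairwise edge-disjoint closed paths whose edge sets union to $\FinH$; for such $S$, Whitney's theorem (Theorem~\ref{thm:Whitney}) gives $(-1)^{|S|}\prod_{\ell\in S}e^{\frac i2\alpha(\ell)}=\prod_{\ell\in S}(-1)^{C(\ell)}$, so the coefficient of $\prod_{\ue\in\FinH}x_{\ue}$ in $\prod_{\ell}(1-w(\ell))$ equals $\sum_{S}(-1)^{\sum_{\ell\in S}C(\ell)}$, the sum over all decompositions of $\FinH$ into edge-disjoint closed paths.

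The main obstacle is then the purely combinatorial identity: for every even $\FinH$,
\begin{align*}
\sum_{S}(-1)^{\sum_{\ell\in S}C(\ell)} = (-1)^{C(\FinH)},
\end{align*}
together with the vanishing of all non-squarefree monomials in $\prod_{\ell}(1-w(\ell))$. I would prove this (as in \cite{KLM}) by resolving the vertices of degree $\ge4$ one at a time: at such a vertex the local routings are the pairings of the cyclically ordered edge-ends, the parity of $\sum_{\ell}C(\ell)$ changes according to the crossings these pairings create, and the alternating sum over all pairings of $2k$ points on a circle equals $1$ (a Pfaffian-type identity); iterating this, and using cancellations of walks traversing an edge in both directions in the spirit of Lemma~\ref{lem:zerosuminverse} to kill the non-squarefree terms, collapses the expansion to the single term of weight $(-1)^{C(\FinH)}\prod_{\ue\in\FinH}x_{\ue}$. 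Granting this, $\prod_{\ell}(1-w(\ell))=Z$, and combining with the determinant identity gives $Z=\exp\big(-\sum_\Walk w(\Walk)/(2|\Walk|)\big)=\sqrt{\det\KW}$.
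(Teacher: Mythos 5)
First, a point of reference: the paper does not prove this statement --- it is quoted verbatim from \cite{KLM}*{Theorem 1.9} --- so there is no in-paper argument to compare yours against, and I can only measure your sketch against the proof in \cite{KLM}, whose overall architecture you have in fact reproduced. Your proof of the right-hand equality is complete and correct: $\log\det(\Id-\Lambda)=-\sum_{k\ge1}\mathrm{tr}(\Lambda^k)/k$, the identification of $\mathrm{tr}(\Lambda^k)$ with the sum of $w(\Walk)$ over closed walks of length $k$ via \eqref{eq:stepfact} (the non-backtracking constraint being built into the vanishing entries of $\Lambda$), and the passage to the principal square root for $\|x\|_\infty$ small are all sound. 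The reduction of the left-hand side to $\prod_\ell(1-w(\ell))$ is also correctly set up, including the symmetry factor $2|\Walk|$ (you should remark that a non-backtracking closed walk is never a cyclic rotation of its own reversal --- otherwise a backtrack would occur at the axis of symmetry --- so a loop of length $m$ really has $2m$ primitive representatives), and your evaluation of the square-free coefficients via Whitney's theorem, the evenness of $C(\Walk_1,\Walk_2)$ for edge-disjoint closed paths, and the identity $\sum_{\text{pairings of }2k\text{ points}}(-1)^{\#\text{crossings}}=1$ is the right computation for that part.

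The genuine gap is the cancellation of the non-square-free terms, which is the heart of the Kac--Ward/Sherman argument and occupies most of the proof in \cite{KLM}; you dispose of it in half a sentence by invoking ``cancellations of walks traversing an edge in both directions in the spirit of Lemma~\ref{lem:zerosuminverse}'', but that mechanism cannot work here. Lemma~\ref{lem:zerosuminverse} rests on the first case of \eqref{eq:inverted}, namely $w(\Walk)=-w(-\Walk)$ for $\Walk\in\WalkW(\ve,-\ve)$, whereas the weights appearing in $\prod_\ell(1-w(\ell))$ are weights of \emph{closed} walks, for which the second case of \eqref{eq:inverted} gives $w(\Walk)=+w(-\Walk)$: reversing a loop preserves its weight and cancels nothing. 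What is actually required is a sign-reversing involution on configurations $S$ that use some edge more than once (twice by a single loop, in either relative orientation, or once each by two distinct loops): one locates a canonical doubly-used edge and re-routes the two strands through it in the other admissible non-backtracking way, which either splits one loop into two or merges two loops into one (flipping $(-1)^{|S|}$), or reconnects a single loop while changing its crossing parity. Verifying that this surgery is a well-defined, weight-preserving, fixed-point-free involution that always flips the sign is the substantive content of the proof of Theorem 1.9 in \cite{KLM}; as written, your argument assumes it. If your intention is simply to cite this cancellation --- as the paper itself cites the whole theorem --- you should say so explicitly rather than attributing it to Lemma~\ref{lem:zerosuminverse}.
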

The first equality of this theorem yields a direct connection between the generating function of even subgraphs and the signed non-backtracking walks.
Note that the notation used here differs from the one in~\cite{KLM}. In particular, the signed weight from~\cite{KLM} is minus the signed weight defined
 in~\eqref{eq:defweight}. Also the corresponding exponential formula in~\cite{KLM} is written in terms of \emph{loops}, 
i.e.\ equivalence classes of closed walks defining the 
same, up to a time parametrization, closed curve in the plane. The signed weight of a loop is then defined as the sum of signed weights of closed walks in the
equivalence class.

\subsection{The inverse Kac-Ward operator} \label{sec:InverseKW}
In this section we assume that $\FinG$ is finite and without edge crossings.
If one wants to compute the inverse of the Kac-Ward operator, one can use the power series formula:
\begin{align} \label{eq:powerseries}
 \KW^{-1}_{\ve,\vg} =(\Id - \Lambda )^{-1}_{\ve,\vg} = \sum_{n=0}^{\infty} \Lambda_{\ve,\vg}^n =  \sum_{\Walk \in \WalkW(\ve,\vg)} w(\Walk), 
\end{align}
which is valid for $\|x\|_{\infty}$ small enough. The last sum is over all non-backtracking walks since the transition matrix $\Lambda$ assigns zero weight to steps between $\ve$ and $-\ve$.
It turns out that this sum can be expressed in terms of a generating function of certain subgraphs of $\FinG$ (or rather its particular modification).

To this end, let $m(\ve)=(t(\ve)+h(\ve))/2$ be the midpoint of $\ve$.
Given $\ve,\vg \in \dG$, we define a modified graph 
\[
\FinG_{\ve,\vg} = \big( \FinG \setminus \{ \ue,\ug\} \big) \cup  \big \{ \{ m(\ve), h(\ve) \}, \{t(\vg), m(\vg) \} \}
\] 
which, instead of $\ue$ and $\ug$, contains appropriate \emph{half-edges}.
The weight of $\{ m(\ve), h(\ve) \}$ is set to be $x_{\ue}$, and in the case when $\vg \neq -\ve$, the weight of $\{t(\vg), m(\vg) \}$ is one. 
We write $\fE(\ve,\vg)$ for the collection of subgraphs $\FinH \subset \FinG_{\ve,\vg}$ containing the half-edges $ \{ m(\ve), h(\ve) \}$ and $\{t(\vg), m(\vg)\} $,
 and such that all vertices of $\FinG$ have even degree in $\FinH$ (see Figure \ref{fig1}). Note that we do not require that $m(\ve)$ and $m(\vg)$ have even degree.
It follows that $\fE(\ve,-\ve)$ is empty, since there is no graph which has exactly one vertex with odd degree. Also note that $\fE(\ve,\ve)$ is 
in bijective correspondence with the set of even subgraphs of~$\FinG$ containing~$\ue$.

Suppose that $\vg \neq -\ve$ and take $\FinH \in \fE(\ve,\vg)$. It follows that there is a path in $\FinH$ which starts
at $(m(\ve), h(\ve)) $ and ends at $(t(\vg), m(\vg))$.
Let $\Walk_{\FinH}$ be the left-most such path, i.e.\ the path which always makes a step to the left-most edge which has not yet been visited 
in any direction. Note that $\FinH$ splits into $\Walk_{\FinH}$ and an even subgraph of $\FinG$ (see Figure \ref{fig1}).
Since $\FinH$ also belongs to $\fE(-\vg,-\ve)$ this notation may be ambiguous (the reversed left-most path becomes the right-most path), but we will always 
use it in the context of fixed edges $\ve$ and $\vg$.

For $\ve,\vg \in \dG$, we define the \emph{fermionic generating function} by 
\begin{align} \label{eq:deffermion}
F_{\ve,\vg} = \delta_{\ve,\vg}+\frac{1}{Z} \sum_{\FinH \in \fE (\ve,\vg)} e^{-\frac{i}{2} \alpha(\Walk_{\FinH})}\prod_{h \in \FinH} x_{h},
\end{align}
where $\delta$ is the Kronecker delta.
For $\vg = -\ve$, the above sum is empty and we take it to be zero.

Note the resemblance between this definition and the definitions of
fermionic observables from \cites{ChelkSmir,HongSmir,HKZ}. The important difference is that the fermionic generating function is a function of two 
directed edges and the fermionic observable from the literature can be seen as a function of one directed and one undirected edge. Indeed, for regular lattices 
(the square, triangular and hexagonal lattice), the fermionic observable is, up to a complex multiplicative constant, the symmetrization in the 
variable $\vg$ of the fermionic generating function, i.e.\ the sum over the two opposite orientations of the undirected edge $\ug$. For general isoradial 
graphs it becomes a weighted symmetrization, where the weight depends on the local geometry of the graph (see Section~\ref{sec:isoradial}).

Recall that we assume that $\FinG$ is finite and does not have any edge crossings.
We can now state the main theorem of this section:

\begin{theorem} \label{thm:fermionwalks}
For any $\ve,\vg \in \dG$,
\begin{align*}
		\overline{F_{\ve,\vg}} = \sum_{\Walk \in \WalkW(\ve,\vg)} w(\Walk).
\end{align*}
\end{theorem}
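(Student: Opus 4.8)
The plan is to work with both sides as absolutely convergent power series in the edge weights $(x_\ue)_{\ue\in\FinG}$ — legitimate once $\| x \|_{\infty}$ is small, as arranged above — and, after multiplying through by the unit $Z$, to prove
\[
Z\sum_{\Walk\in\WalkW(\ve,\vg)}w(\Walk)\;=\;Z\,\delta_{\ve,\vg}\;+\;\sum_{\FinH\in\fE(\ve,\vg)}e^{\frac{i}{2}\alpha(\Walk_\FinH)}\prod_{h\in\FinH}x_h,
\]
the right-hand side being $Z\,\overline{F_{\ve,\vg}}$ since complex conjugation fixes $\delta_{\ve,\vg}$ together with the real quantities $Z$, $x_h$ and $\alpha(\Walk_\FinH)$. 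Two preliminary reductions cut the work down. First, when $\vg=-\ve$ the set $\fE(\ve,-\ve)$ is empty and $\delta_{\ve,-\ve}=0$, so the right-hand side vanishes, while the left-hand side vanishes by Lemma~\ref{lem:zerosuminverse}; hence I may assume $\vg\neq-\ve$. Second, since the half-edge $\{m(\ve),h(\ve)\}$ carries weight $x_\ue$ while $\{t(\vg),m(\vg)\}$ carries weight $1$, and since the arguments of these half-edges agree with those of $\ve$ and $\vg$, the definition \eqref{eq:defweight} of the signed weight, applied to the path $\Walk_\FinH$ (for which the product there ranges exactly over the elements of $\ES(\Walk_\FinH)$), gives $e^{\frac{i}{2}\alpha(\Walk_\FinH)}\prod_{h\in\ES(\Walk_\FinH)}x_h=w(\Walk_\FinH)$. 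Writing each $\FinH\in\fE(\ve,\vg)$ as the edge-disjoint union of $\Walk_\FinH$ (including its terminal half-edge, of weight $1$) and an even subgraph $\FinH'$ of $\FinG$, the displayed identity turns into
\[
Z\sum_{\Walk\in\WalkW(\ve,\vg)}w(\Walk)\;=\;Z\,\delta_{\ve,\vg}\;+\;\sum_{\FinH\in\fE(\ve,\vg)}w(\Walk_\FinH)\prod_{\ue'\in\FinH'}x_{\ue'}\,,
\]
and, since the length-zero walk accounts for $Z\,\delta_{\ve,\vg}$ on the left, it suffices to match the positive-length walks with the $\fE(\ve,\vg)$ sum.

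The heart of the matter is a rearrangement of the left-hand side after expanding $Z$ via its even-subgraph definition \eqref{eq:defZ}: this presents $Z\sum_{|\Walk|\ge1}w(\Walk)$ as a sum over pairs $(\mathcal{K},\Walk)$ with $\mathcal{K}\subset\FinG$ even and $\Walk\in\WalkW(\ve,\vg)$ of positive length, each carrying weight $(-1)^{C(\mathcal{K})}\,w(\Walk)\prod_{\ue\in\mathcal{K}}x_\ue$. The plan is to group these pairs into classes on which the signed sum either cancels to zero or collapses onto a single ``path plus even subgraph'' configuration. The cancellations are produced by the sign-reversal property \eqref{eq:inverted}: whenever the walk runs through an undirected edge in both directions — or, more generally, contains a detachable sub-excursion that can be reversed independently of the rest of the data — the configuration obtained by reversing that part has the opposite weight, and the two cancel. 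This is the mechanism already exploited in Lemma~\ref{lem:zerosuminverse}, and its systematic forms — signed sums over families of closed walks with a prescribed edge set, and the splitting of a walk into a core path and detachable loops — are precisely what Lemmas~\ref{lem:sumsloops} and~\ref{lem:walkfact} supply. After the cancellations the only surviving configurations are those in which $\Walk$ is a genuine path $P$ from $\ve$ to $\vg$ and the remaining edges assemble into an even subgraph of $\FinG$ edge-disjoint from $P$; and the convention that $\Walk_\FinH$ is the \emph{left-most} such path is exactly what makes these survivors correspond bijectively to the decompositions $\FinH\mapsto(\Walk_\FinH,\FinH')$ of the elements of $\fE(\ve,\vg)$. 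Finally, the leftover crossing signs $(-1)^{C(\mathcal{K})}$, together with the evenness of the number of crossings between edge-disjoint closed paths, are reconciled with the phase $e^{\frac{i}{2}\alpha(\Walk_\FinH)}$ inside $w(\Walk_\FinH)$ by applying Whitney's theorem (Theorem~\ref{thm:Whitney}) to the closed paths that constitute $\FinH'$.

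I expect the principal obstacle to be exactly this rearrangement: the reversal cancellations and the left-most-path decomposition must be carried out \emph{simultaneously} on the pairs $(\mathcal{K},\Walk)$, because one cannot first cancel inside $\sum_\Walk w(\Walk)$ and only then multiply by $Z$ — an edge used once by a walk may be absorbed into the even subgraph $\mathcal{K}$ furnished by $Z$, which alters which edges count as singly used and hence what ``path plus even subgraph'' means. So the equivalence classes have to be defined directly on the pairs, compatibly with \eqref{eq:inverted} and with the left-most-path convention introduced above, and one must verify that exactly the path-plus-even-subgraph configurations survive, with the right signs, without double counting, and with the residual crossings correctly bookkept by Whitney's theorem. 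This verification — for which Lemmas~\ref{lem:sumsloops} and~\ref{lem:walkfact} do the real work — is the technical core of the argument.
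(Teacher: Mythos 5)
Your setup is sound: the target identity $Z\sum_{\Walk}w(\Walk)=Z\delta_{\ve,\vg}+\sum_{\FinH\in\fE(\ve,\vg)}e^{\frac{i}{2}\alpha(\Walk_\FinH)}\prod_{h\in\FinH} x_h$ is the right thing to prove, the case $\vg=-\ve$ is correctly dispatched by Lemma~\ref{lem:zerosuminverse}, and the identification $e^{\frac{i}{2}\alpha(\Walk_\FinH)}\prod_{h\in\ES(\Walk_\FinH)}x_h=w(\Walk_\FinH)$ is correct. But the argument stops exactly where it would have to begin. The entire content of the theorem is the ``rearrangement'' you describe in your second paragraph --- defining equivalence classes on the pairs $(\mathcal K,\Walk)$, exhibiting a sign-reversing involution on the non-surviving classes, and showing that the survivors are precisely the path-plus-even-subgraph configurations counted by $\fE(\ve,\vg)$ via the left-most-path convention --- and you explicitly defer all of it as ``the technical core.'' Lemmas~\ref{lem:sumsloops} and~\ref{lem:walkfact} as stated are identities for sums over walks alone; they do not act on pairs $(\mathcal K,\Walk)$, and, as you yourself observe, one cannot first cancel inside $\sum_\Walk w(\Walk)$ and then multiply by $Z$. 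So invoking them as ``doing the real work'' does not close the gap: you would either have to prove pair-level analogues of these cancellations, or find another way to bring $Z$ into the walk sums. As written this is a plan, not a proof.

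There is also a misdirection in the sign bookkeeping. Since $\FinG$ has no edge crossings, every even $\mathcal K\subset\FinG$ has $C(\mathcal K)=0$, so there are no ``leftover crossing signs'' from expanding $Z$, and applying Whitney's theorem to the closed paths constituting $\FinH'$ yields nothing. What Whitney's theorem must actually do is convert the winding phase $e^{\frac{i}{2}\alpha(\Walk_\FinH)}$ of the \emph{open} path into a sign; for that one closes $\Walk_\FinH$ up with an auxiliary edge $\gamma$ from $m(\ug)$ to $m(\ue)$ and counts self-crossings of the resulting closed path $\Walk^{\circ}_{\FinH}$ --- these are exactly the crossings of $\gamma$ with $\ES(\Walk^{\circ}_{\FinH})$, and one must separately check that $\gamma$ crosses the remaining edges of $\FinH$ an even number of times (using that $C(\Walk^{\circ}_{\FinH},\Walk_i)$ is even for the edge-disjoint closed paths $\Walk_i$ decomposing $\FinH\setminus\ES(\Walk^{\circ}_{\FinH})$). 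For comparison, the paper sidesteps the pair-cancellation entirely: it writes $e^{\frac{i}{2}\beta}\,\overline{F_{\ve,\vg}}$ as $(Z_{\FinG_{\ve,\vg}}-Z_{\FinG_{\ve,\vg}\cup\{\gamma\}})/Z_{\FinG}$, telescopes this into a product of three ratios of generating functions on nested graphs, converts each ratio into a walk sum via Corollary~\ref{cor:Z} (i.e.\ via the exponential formula of Theorem~\ref{thm:main}), and reassembles the product with Lemma~\ref{lem:walkfact}. If you want to keep your direct combinatorial route, the missing involution argument on pairs must be supplied; otherwise the ratio-of-$Z$'s route is the one the available lemmas actually support.
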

For the proof of this result, see Section~\ref{sec:prooffermionwalks}.
As a direct corollary, we get that $\overline{F}=\big(\overline{F_{\ve,\vg}}\big)_{\ve,\vg \in \dG}$ is the inverse Kac-Ward operator:

\begin{figure}
		\begin{center}
			\includegraphics{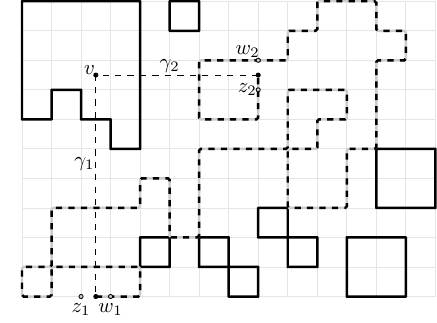}
		\end{center}
		\caption{In this case, $\FinG$ is a rectangular piece of the square lattice. A graph $\FinH \in \fE (\ve,\vg)$ is drawn in bold lines, 
where $\ve=(z_1,w_1)$ and $\vg=(z_2,w_2)$. The graph splits into two parts: the path $\Walk_{\FinH}$ represented by the dashed lines, and an even 
subgraph of $\FinG$. Adding two edges $\gamma_1$ and $\gamma_2$ makes $\Walk_{\FinH}$ into a closed path with three self-crossings, and $\FinH$ 
into an even subgraph of $\FinG_{\ve,\vg} \cup \{ \gamma_1,\gamma_2\}$ with five edge crossings (see the proof of Theorem \ref{thm:fermionwalks}).}
		\label{fig1}
\end{figure}

\begin{corollary} \label{cor:Tinverse} The inverse Kac-Ward operator on a finite graph $\FinG$ with no edge crossings 
is the complex conjugate of the fermionic generating function, i.e.\
\begin{align*}
	\KW^{-1} = \overline{F}.
\end{align*}
\end{corollary}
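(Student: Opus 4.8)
The plan is to obtain Corollary~\ref{cor:Tinverse} as an immediate consequence of Theorem~\ref{thm:fermionwalks} together with the power series expansion~\eqref{eq:powerseries} of the inverse Kac-Ward operator. First I would note that Theorem~\ref{thm:fermionwalks} gives, entrywise, $\overline{F_{\ve,\vg}} = \sum_{\Walk \in \WalkW(\ve,\vg)} w(\Walk)$, while~\eqref{eq:powerseries} gives $\KW^{-1}_{\ve,\vg} = \sum_{\Walk \in \WalkW(\ve,\vg)} w(\Walk)$, valid for $\|x\|_\infty$ sufficiently small (say $\|x\|_\infty < 1/(\Delta - 1)$, so that the walk series converges absolutely). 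Combining the two displays entry by entry yields $\KW^{-1}_{\ve,\vg} = \overline{F_{\ve,\vg}}$ for all $\ve,\vg \in \dG$, i.e.\ $\KW^{-1} = \overline{F}$, for such small weights.

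The only genuine step is to remove the smallness assumption on $\|x\|_\infty$, since Corollary~\ref{cor:Tinverse} is stated for an arbitrary positive bounded weight system on a finite graph without crossings. Here I would invoke analyticity: on a finite graph both $\KW^{-1}_{\ve,\vg}$ and $\overline{F_{\ve,\vg}}$ (equivalently $F_{\ve,\vg}$, up to conjugating the finitely many real edge weights) are rational functions of the variables $(x_\ue)_{\ue \in \FinG}$, with no poles on the positive orthant — for $\KW^{-1}$ because $\det \KW > 1$ whenever $\FinG$ is finite and crossing-free by Theorem~\ref{thm:main} (so $\KW$ is invertible, with entries of $\KW^{-1}$ given by cofactors over $\det \KW$), and for $F$ because $Z > 1$ in the same setting, again by Theorem~\ref{thm:main} via the relation $Z = \sqrt{\det \KW}$, so the defining sum~\eqref{eq:deffermion} is a well-defined rational (in fact polynomial-over-$Z$) expression. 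Two rational functions that agree on the open set $\{\|x\|_\infty < 1/(\Delta-1)\}$ and are both regular on the positive orthant must agree throughout it, and hence the identity $\KW^{-1} = \overline{F}$ persists for all admissible weight systems.

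I do not anticipate a serious obstacle; the content is entirely in Theorem~\ref{thm:fermionwalks}, and the corollary is a packaging statement. The one point requiring a little care is the passage from small weights to general positive weights, which I would handle by the analytic-continuation argument just sketched rather than by re-examining convergence of the walk expansion (which genuinely requires $\|x\|_\infty$ small). An alternative, avoiding analyticity altogether, would be to observe directly that $\KW \overline{F} = \Id$ as an identity of rational functions — but this is essentially what the proof of Theorem~\ref{thm:fermionwalks} already establishes, so invoking the theorem and then extending by rationality is the cleanest route.
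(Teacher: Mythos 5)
Your proposal is correct and follows essentially the same route as the paper: the identity is first obtained for small $\|x\|_\infty$ by combining Theorem~\ref{thm:fermionwalks} with the expansion~\eqref{eq:powerseries}, and then extended to arbitrary positive weights by an analyticity argument made possible by the non-vanishing of $Z$ and $\det\KW$ on crossing-free finite graphs. The only cosmetic difference is that the paper continues analytically along the one-parameter ray $t\mapsto tx$, $t\in(0,\infty)$, rather than invoking agreement of multivariate rational functions on the whole positive orthant; both work.
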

\begin{proof} Fix $\ve,\vg \in \dG$, and a weight system $x$. 
Consider the rescaled system of weights $tx$, where $t$ is a positive real number. 
Since $\FinG$ has no edge crossings, $Z$ is never zero by \eqref{eq:defZ} and it follows from Theorem \ref{thm:main} that $\det \KW$ is also never zero.
Hence, $\overline{F_{\ve,\vg}}$ and $\KW^{-1}_{\ve,\vg}$, treated as functions of the scaling factor~$t$, are analytic on $(0,\infty)$. By 
uniqueness of the analytic continuation, it is enough to prove the desired equality for $t$ small, and this follows from Theorem \ref{thm:fermionwalks}
and the power series expansion \eqref{eq:powerseries}.
\end{proof}
Note that the fermionic generating function was defined only for finite graphs. Theorem \ref{thm:fermionwalks} and Corollary \ref{cor:Tinverse} give
two interpretations of $F$ which do not require finiteness of the underlying graph. We will discuss this issue in Section \ref{sec:isoradial}.

\section{The Kac-Ward operator on isoradial graphs} 
\label{sec:isoradial}
In this section we assume that $\FinG$ is a subgraph of an infinite isoradial graph~$\InfG$.  This means that all faces of $\InfG$ 
can be inscribed into circles with a common radius and the circumcenters lie within the corresponding faces. Equivalently, the dual graph $\InfG^*$ can
be embedded in such a way that all pairs of mutually dual edges $\ue$ and $\ue^*$ form diagonals of rhombi.
For each edge~$\ue$, let~$\theta_{\ue}$ be the undirected 
angle between $\ue$ and any side of the rhombus associated to~$\ue$ (see Figure \ref{fig2}). We will consider a family of weight 
systems given by
\begin{align} \label{eq:isingweights}
x_{\ue}(\beta) = \tanh \beta J_{\ue}, \qquad \text{where} \qquad \tanh J_{\ue}=\tan (\theta_{\ue}/2),
\end{align}
and where $\beta \in (0,1]$ is the \emph{inverse temperature}.
These weights come from the high-temperature expansion of the Ising model and the numbers 
$J_{\ue}$ are called the \emph{coupling constants} (see e.g.~\cite{KLM}).
In the case when $\beta=1$ we will talk about the \emph{critical} weight system, and for $\beta \in (0,1)$ the
weights will be called \emph{supercritical}. The critical case corresponds to the self-dual Z-invariant Ising model which was introduced by 
Baxter~\cite{Baxter} and which has been extensively studied in the mathematical literature.
Chelkak and Smirnov proved in \cite{ChelkSmir} that the critical fermionic observable has a universal, conformally invariant scaling limit. 
Boutillier and de Tili{\`e}re~\cites{BoutTil1, BoutTil2} analysed the model using the dimer representation, and the author \cite{Lis} proved that, 
after introducing the inverse temperature parameter, the model has a phase transition at  $\beta=1$ and nowhere else.

\subsection{The critical Kac-Ward operator and s-holomorphicity}
In this section we assume that the weight system is critical.
The notion of \emph{s-holomorphicity} (s stands for \emph{strong} or \emph{spin}) was introduced in \cite{Smir1} in the setting of the square lattice, 
and was later generalized in~\cite{ChelkSmir} to fit the context of general isoradial graphs.
Our definition of s-holomorphicity will be equivalent to that in \cite{ChelkSmir}, up to multiplication of the function by some globally fixed complex constant.

\begin{figure}
		\begin{center}
			\includegraphics{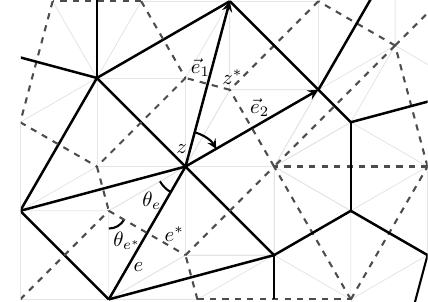}
		\end{center}
		\caption{A local picture of an isoradial graph and its dual. The directed arc marks the turning angle $\angle(\ve_1,\ve_2)$.}
		\label{fig2}
\end{figure}

Consider a vertex $z$ in $\InfG$ and let $z^*$ be a vertex in $\InfG^*$ corresponding to one of the faces of 
$\InfG$ incident to $z$. By $\ue_1$ and $\ue_2$ we denote the two edges lying on the boundary of this face and having $z$ as an endpoint (see Figure \ref{fig2}). 
We say that a complex function $f$ defined on the edges of $\InfG$ is \emph{s-holomorphic} at~$z$ if 
for all such dual vertices $z^*$ and the corresponding edges $\ue_1$ and $\ue_2$, 
\begin{align*}
\Proj(f(\ue_1);  (z-z^*)^{-\frac{1}{2}}\R) = \Proj(f(\ue_2);  (z-z^*)^{-\frac{1}{2}}\R),
\end{align*}
where $\Proj(w;  \ell)$ is the orthogonal projection of the complex number $w$ onto the line $\ell$. 
Note that the choice of the square root is immaterial in the definition above.
The property of being s-holomorphic is a real linear property, i.e.\ addition of two functions and multiplication of a function by a real number preserves s-holomorphicity. 
It is also a stronger property than the usual discrete holomorphicity: if a function is s-holomorphic at $z$, then 
the same function considered as a function on the dual edges is discrete holomorphic at $z$, i.e.\ the
discrete contour integral around the face corresponding to $z$ vanishes.
On the other hand, each discrete holomorphic function is, up to an additive constant, uniquely represented as a sum of two s-holomorphic functions,
where one of them is multiplied by~$i$. For proofs of these facts and other properties of s-holomorphic functions, see~\cite{ChelkSmir}.

The Kac-Ward operator was defined in Section \ref{sec:genfunctions} as an automorphism of the complex vector space $\C^{\dG}$ but it can also be seen as
an operator acting on a smaller real vector space. To be precise, to each directed edge $\ve$ we associate a line $\ell_{\ve}$ in the complex plane defined by
\begin{align*}
\ell_{\ve} = e^{-\frac{i}{2} \angle(\ve)} \R,\qquad \text{where} \quad \angle(\ve)= \text{Arg} (h(\ve) - t(\ve)).
\end{align*}
As before, we use the principal value of the complex argument.
Note that $\ell_{\ve}$ and $\ell_{-\ve}$ are orthogonal and they can be thought of as ``a local coordinate system at $\ue$''.
We will consider the direct product of the lines treated as one-dimensional real vector spaces, i.e.\ we put
\begin{align*}
	\LL = \prod_{\ve \in \dG} \ell_{\ve}.
\end{align*}
By the logarithmic property of the complex argument, $T_{\ve,\vg}$ defines by multiplication a linear map from $\ell_{\vg}$ to $ \ell_{\ve}$.
This means that the Kac-Ward operator can be seen as an automorphism of $\LL$.
We define $\XX$ to be $\C^{\FinG}$ treated as a real vector space and we consider an isomorphism between $\XX$ and $\LL$ given by
\begin{align*}
\CI f(\ve) = \sin(\theta_{\ue}/2) \Proj (f(\ue) ; \ell_{\ve})\qquad \text{for } f \in \XX.
\end{align*}

If $\InfG$ is a regular lattice (the square, triangular or hexagonal lattice), then all the angles $\theta_{\ue}$ are equal and $S$ is proportional to the projection operator
which gives ``local coordinates'' at each edge. Note that the inverse of $S$ is just a locally rescaled symmetrization operator, i.e.
\begin{align*}
S^{-1}\varphi(\ue) = \big(\sin (\theta_{\ue}/2)\big)^{-1}\big(\varphi(\ve)+\varphi(-\ve)\big)\qquad \text{for } \varphi \in \LL.
\end{align*}

We say that $z$ is an interior vertex of $\FinG$ if the degrees of $z$ in $\FinG$ and~$\InfG$ are the same.
The set of edges emanating from a vertex $z$ is denoted by $\Out(z)=\{\ve \in \dG: t(\ve)=z \}$, and $\In(z)=\{\ve \in \dG: h(\ve)=z \} = - \Out(z)$ 
are the edges pointing at $z$. The next result expresses the fact that the critical Kac-Ward operator (composed with $S$) can be seen as the operator of s-holomorphicity: 
\begin{theorem} \label{thm:KWshol}
Let $\KW$ be the critical Kac-Ward operator. A function $f\in \XX$ is s-holomorphic at an interior vertex~$z$ if and only if 
\begin{align*}
\KW \CI f(\ve) =0\qquad \text{for all } \ve \in \mathrm{In}(z).
\end{align*}
\end{theorem}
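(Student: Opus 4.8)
The plan is to unwind both sides of the claimed equivalence at a fixed interior vertex $z$ into statements about the finitely many complex numbers $f(\ue)$ for edges $\ue$ incident to $z$, and then match them. First I would write out what $\KW\CI f(\ve)=0$ means for $\ve\in\In(z)$: since $\KW=\Id-\Lambda$, this reads $\CI f(\ve)=\sum_{\vg}\Lambda_{\ve,\vg}\CI f(\vg)$, and by the definition of $\Lambda$ in \eqref{eq:transitionmatrix} the only nonzero terms have $t(\vg)=h(\ve)=z$ and $\vg\neq-\ve$; that is, $\vg$ ranges over $\Out(z)\setminus\{-\ve\}$. So for each $\ve\in\In(z)$ we get one scalar equation
\begin{align*}
\CI f(\ve)=\sum_{\vg\in\Out(z),\ \vg\neq-\ve} x_{\ug}\, e^{\frac{i}{2}\angle(\ve,\vg)}\,\CI f(\vg).
\end{align*}
Next I would rewrite this purely in terms of $f$ using $\CI f(\ve)=\sin(\theta_\ue/2)\Proj(f(\ue);\ell_\ve)$ and the critical weights $x_\ug=\tan(\theta_\ug/2)$; the phase factors $e^{-\frac{i}{2}\angle(\ve)}$ hidden in the projections onto $\ell_\ve$, together with $e^{\frac{i}{2}\angle(\ve,\vg)}=e^{\frac{i}{2}(\angle(\vg)-\angle(\ve))}$ (valid mod the sign ambiguity, which is harmless after projecting), should conspire so that each term $\Lambda_{\ve,\vg}\CI f(\vg)$ lands in $\ell_\ve$ and the equation becomes a genuine identity of real multiples of $e^{-\frac{i}{2}\angle(\ve)}$.

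The core of the argument is then a local linear-algebra computation at $z$. Label the edges around $z$ cyclically $\ue_1,\dots,\ue_d$ in the isoradial/rhombic picture, so that consecutive edges $\ue_j,\ue_{j+1}$ bound a face with dual vertex $z^*_j$, and $(z-z^*_j)^{-1/2}\R$ is exactly the line appearing in the s-holomorphicity condition. The s-holomorphicity of $f$ at $z$ is the system of equalities $\Proj(f(\ue_j);(z-z^*_j)^{-1/2}\R)=\Proj(f(\ue_{j+1});(z-z^*_j)^{-1/2}\R)$ for $j=1,\dots,d$ (indices mod $d$). The plan is to show this system is equivalent to the system of $d$ Kac-Ward equations above, indexed by $\ve_j=(\cdot,z)\in\In(z)$. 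I would do this by exhibiting a single edge-local change of variables: write the s-holomorphicity condition as "$f(\ue_j)$ and $f(\ue_{j+1})$ have equal components along the bisecting direction of the $j$-th rhombus", split $f(\ue_j)$ into its two $\ell_{\ve_j},\ell_{-\ve_j}$ components, and observe that the bisector direction $(z-z^*_j)^{-1/2}$ makes a known angle (related to $\theta_{\ue_j}/2$ and $\theta_{\ue_{j+1}}/2$) with each $\ell$. Summing the telescoping consequences of the $d$ projection identities, or better, solving them, yields precisely the recursion expressing $\CI f(\ve_j)$ as the $\Lambda$-weighted sum over the other outgoing edges.

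The main obstacle, I expect, is bookkeeping the angles correctly: getting the turning angles $\angle(\ve,\vg)$, the base angles $\angle(\ve)$, the rhombus half-angles $\theta_\ue/2$, and the square-root branch of $(z-z^*)^{-1/2}$ all into a consistent convention so that the phases cancel exactly rather than up to a sign or a global constant. (The paper's own hedge that its s-holomorphicity agrees with \cite{ChelkSmir} only "up to multiplication by a globally fixed complex constant" signals that this is delicate but ultimately a constant-tracking issue.) A clean way to manage it is to first verify the equivalence in the case of a single face, i.e. $d=2$ with $z$ having degree $2$ — there it is a one-line trigonometric identity relating $\Proj$ onto the bisector to $\Proj$ onto $\ell_{\ve_1},\ell_{\ve_2}$ with coefficient $\tan(\theta/2)$ — and then note that the general interior vertex is just the conjunction over the $d$ incident faces of these two-edge identities, because $\Lambda_{\ve_j,\vg}$ is supported exactly on $\vg\in\Out(z)\setminus\{-\ve_j\}$ and the s-holomorphicity system couples only consecutive edges. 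I would also remark that "interior vertex" is used precisely so that $\Out(z)$ and $\In(z)$ in $\FinG$ coincide with those in $\InfG$, so the rhombic local picture and hence the angle relations are the full infinite-graph ones.
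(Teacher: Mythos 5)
Your overall strategy (unwind both sides into a local linear system at $z$ and match them) points in the right direction, but the reduction in your last paragraph contains a genuine gap, and the step it skips is precisely the content of the paper's proof. The s-holomorphicity condition at the face between consecutive edges $\ue_1,\ue_2$ involves only $f(\ue_1)$ and $f(\ue_2)$, whereas the single equation $\KW\CI f(\ve_1)=0$ involves $\CI f(\vg)$ for \emph{every} $\vg\in\Out(z)\setminus\{-\ve_1\}$, i.e.\ all $d$ edges at $z$. So there is no face-by-face equivalence between individual projection identities and individual equations $\KW\CI f(\ve_j)=0$; your claim that the general case is ``just the conjunction over the $d$ incident faces of these two-edge identities'' asserts the theorem rather than proving it, and the observation you cite in its support (that $\Lambda_{\ve_j,\cdot}$ is supported on all of $\Out(z)\setminus\{-\ve_j\}$) in fact shows each Kac-Ward equation is \emph{not} a two-edge identity. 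The missing mechanism is a telescoping cancellation: the face condition for $\ue_1,\ue_2$ is equivalent not to the vanishing of anything but to the \emph{proportionality}
\begin{align*}
x_{\ue_1}^{-1}\,\KW\CI f(\ve_1)\;=\;e^{\frac{i}{2}\angle(\ve_1,\ve_2)}\,x_{\ue_2}^{-1}\,\KW\CI f(\ve_2),
\end{align*}
because in the difference of the two sides the contributions of all $\vg\in\Out(z)\setminus\{-\ve_1,-\ve_2\}$ cancel via the additivity $\angle(\ve_1,\ve_2)+\angle(\ve_2,\vg)=\angle(\ve_1,\vg)$, leaving a two-edge relation that the critical weights convert into $\Proj(f(\ue_1);\ell)=\Proj(f(\ue_2);\ell)$ with $\ell=(z-z^*)^{-1/2}\R$.

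The second missing idea is why these $d$ proportionalities force vanishing. Chaining them around the vertex gives $\KW\CI f(\ve_1)=e^{\frac{i}{2}\sum_j\angle(\ve_j,\ve_{j+1})}\,\KW\CI f(\ve_1)$, and since the turning angles sum to $2\pi$ the prefactor is $e^{i\pi}=-1$, whence $\KW\CI f(\ve_j)=0$ for all $j$; the converse is immediate because the proportionalities hold trivially when all these quantities vanish. Without this monodromy step, s-holomorphicity would only make the $d$ values $\KW\CI f(\ve_j)$ mutually proportional with prescribed phases, not zero — the half-angle in the Kac-Ward weights is exactly what turns the round-trip factor into $-1$ rather than $+1$, so this is not mere ``constant tracking.'' Note also that your proposed base case ($z$ of degree $2$) hides the difficulty entirely, since there each row of $\Lambda$ restricted to $\Out(z)$ has a single nonzero entry and the global/local mismatch above never appears.
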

The proof of this theorem is given in Section \ref{sec:proofKWshol}. 

Consider the case where $\FinG$ is the full $\InfG$ and take $f$ to be equal to one everywhere. Of course, $f$ is s-holomorphic at all vertices of $\InfG$.
It follows from the theorem above that $\KW \CI f$ is equal to zero everywhere and hence the critical Kac-Ward operator for the full isoradial 
graph has a nontrivial kernel. In particular, it is not invertible on $\LL$, and therefore also on $\C^{\dG}$.

Let us go back to the case where $\FinG$ is a finite subgraph of $\InfG$. From Section~\ref{sec:InverseKW}, we know that the inverse Kac-Ward operator
exists for all weight systems on~$\FinG$. As a consequence of Theorem \ref{thm:KWshol},
 we can construct s-holomorphic functions by applying the inverse of $\KW S$ to functions which are zero almost everywhere. 
To this end, we define the standard basis of $\LL$ to be the set of functions $\{ i_{\ve} \}_{\ve \in \dG}$, where
$i_{\ve}(\vg)= e^{-\frac{i}{2} \angle(\ve) }\delta_{\ve,\vg}$.
It follows that $f_{\ve}=(\KW S)^{-1}i_{-\ve}$ is s-holomorphic at all interior vertices of $\FinG$ which are not $t(\ve)$, and is not s-holomorphic at~$t(\ve)$. We also have that
\begin{align*}
 f_{\ve}(\ug)& = S^{-1}\KW^{-1}i_{-\ve}(\ug) \\ &
 \sim \big(\sin(\theta_{\ug}/2)\big)^{-1} \big(\KW^{-1}_{\vg,-\ve}+\KW^{-1}_{-\vg,-\ve}\big)  \\
& \sim \big(\cos(\theta_{\ug}/2)\big)^{-1} \big(F_{\ve,\vg}+F_{\ve,-\vg}\big),
\end{align*}
where~$\sim$ means equality up to a multiplicative constant depending only on~$\ve$.
We used here Corollary \ref{cor:Tinverse}, the fact that $x_{\ue}T^{-1}_{\vg,\ve}= x_{\ug}\overline{T^{-1}_{-\ve,-\vg}}$, 
and the definition of the critical weight system. As mentioned before, the cosine term vanishes from this expression if $\InfG$ is a regular
lattice. Recalling the definition of $F$, one can see that $f_{\ve}$ is proportional to the critical fermionic observable used in 
\cites{ChelkSmir,HongSmir,HKZ}.

\subsection{The non-backtracking walk representation} 
In this section we provide a representation of the inverse Kac-Ward operator in terms of non-backtracking walks.
Note that we already used this idea in~\eqref{eq:powerseries} but only for weights which were sufficiently small
in the supremum norm. It turns out that the walk expansions on isoradial graphs 
are valid for both supercritical and critical weight systems, though their behavior is different in each of these cases.

We will use tools from \cite{Lis} and hence we need a regularity condition on~$\InfG$, i.e.\
 we will assume that there exist constants~$k$ and~$K$ 
such that 
\begin{align} \label{eq:isoradialcond}
0 < k \leq \theta_{\ue} \leq K < \pi \qquad \text{for all} \ e \in E(\InfG).
\end{align}
Geometrically, this means that the area of the underlying rhombi is uniformly bounded away from zero, or in other words,
the rhombi do not get arbitrarily thin.

For $\ve,\vg \in \dG$, we define $\WalkW_n(\ve,\vg) \subset \WalkW(\ve,\vg)$ to be the subcollection of all walks of length $n$, 
and let $\dist(\ve,\vg)$ be the distance between $\ve$ and $\vg$, i.e.\ the length of a shortest walk in $\WalkW(\ve,\vg)$.
All operators in the following statement are treated as operators on the Hilbert space $\ell^2(\dG)$. 

\begin{theorem} 
\label{thm:walkrepresentation}
If the weights are supercritical and $\FinG$ is a subgraph of $\InfG$, or the weights are critical and $\FinG$ is a finite subgraph of $\InfG$, 
then the inverse Kac-Ward operator is continuous and is given by the matrix
\begin{align*}
\KW^{-1}_{\ve,\vg} = \sum_{n=\dist(\ve,\vg)}^{\infty} \sum_{\Walk \in \WalkW_n(\ve,\vg)} w(\Walk).
\end{align*}

Moreover, in the supercritical case, there exist constants $C$ and $\epsilon <1$ such that
\begin{align*}
 \quad \Big| \sum_{\Walk \in \WalkW_n(\ve,\vg)} w(\Walk) \Big| \leq C \epsilon^n \qquad \text{for all} \ \ve, \vg \ \text{and} \ n.
\end{align*}
Furthermore, $C$ and $\epsilon$ depend only on $\beta$ and on the isoradial graph~$\InfG$, and 
do not depend on the particular choice of $\FinG$.

Finally, if $\FinG$ is the full $\InfG$, then the critical Kac-Ward operator does not have a continuous inverse.
\end{theorem}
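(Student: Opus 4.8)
The plan is to treat all three regimes uniformly through the identity $\KW=\Id-\Lambda$ together with control of the powers of $\Lambda$, using the quantitative bounds of \cite{Lis} as the external input and doing the hands-on work only for the full critical graph. The starting point is that, by \eqref{eq:stepfact} (equivalently, by the bookkeeping behind \eqref{eq:powerseries}), the matrix entry $(\Lambda^{r})_{\ve,\vg}$ equals $\sum_{\Walk\in\WalkW_{r}(\ve,\vg)}w(\Walk)$, and this vanishes unless $r\ge\dist(\ve,\vg)$. Hence the displayed formula reduces to the assertion that the Neumann series $\sum_{r\ge0}\Lambda^{r}$ converges in the operator norm of $\ell^{2}(\dG)$: its limit is then a bounded operator equal to $(\Id-\Lambda)^{-1}=\KW^{-1}$, and reading off the $(\ve,\vg)$ entry gives the claim together with the continuity of $\KW^{-1}$. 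By Gelfand's spectral radius formula, operator-norm convergence holds as soon as the spectral radius of $\Lambda$ is $<1$, and I would obtain this from \cite{Lis}: for supercritical weights it holds for every subgraph $\FinG\subseteq\InfG$ with a bound depending only on $\beta$ and the rhombus angles of $\InfG$, while for critical weights it holds for every finite subgraph of $\InfG$, reflecting the absence of an infinite correlation length in finite volume.

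For the exponential decay in the supercritical case I would use the sharper estimate $\|\Lambda^{r}\|_{\ell^{2}\to\ell^{2}}\le C\epsilon^{r}$ with $\epsilon<1$, with $C$ and $\epsilon$ depending only on $\beta$ and on the constants $k,K$ of \eqref{eq:isoradialcond}, hence only on $\beta$ and $\InfG$. Uniformity in $\FinG$ is automatic once this is available, because the transition matrix of a subgraph is the compression $P\Lambda_{\InfG}P$ of the full transition matrix to the coordinate subspace spanned by the directed edges of $\FinG$, so the hypotheses of the \cite{Lis} estimate are inherited with the same constants. Then $\bigl|\sum_{\Walk\in\WalkW_{r}(\ve,\vg)}w(\Walk)\bigr|=|(\Lambda^{r})_{\ve,\vg}|=|\langle\delta_{\ve},\Lambda^{r}\delta_{\vg}\rangle|\le\|\Lambda^{r}\|\le C\epsilon^{r}$, which is the stated bound.

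For the last assertion I would show that $\KW$ is not bounded below on $\ell^{2}(\dG)$, which already precludes a continuous inverse. The constant function $f\equiv1$ is s-holomorphic at every vertex, so Theorem \ref{thm:KWshol} and the discussion following it give that $\KW\CI f$ vanishes identically, where $\CI f(\ve)=\sin(\theta_{\ue}/2)\Proj(1;\ell_{\ve})$. Since $|\CI f(\ve)|$ is bounded below on at least one of the two orientations of every undirected edge while $\InfG$ is infinite, $\CI f\notin\ell^{2}(\dG)$. I would exhaust $\InfG$ by an increasing sequence of finite subgraphs $\FinG_{n}$ (metric balls, say) and let $g_{n}$ be $\CI f$ restricted to the directed edges of $\FinG_{n}$, set to zero elsewhere, so $g_{n}\in\ell^{2}(\dG)$. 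Because $\KW$ has a bounded number of nonzero entries per row and $\KW\CI f\equiv0$, the vector $\KW g_{n}$ is supported within bounded graph-distance of the set of vertices lying in both $\FinG_{n}$ and its complement and is uniformly bounded there; hence $\|\KW g_{n}\|^{2}$ is of the order of the size of that boundary set, while $\|g_{n}\|^{2}$ is of the order of the number of edges of $\FinG_{n}$. Under \eqref{eq:isoradialcond} the former is of smaller order than the latter, so $g_{n}/\|g_{n}\|$ witnesses that $\inf\{\|\KW h\|:\|h\|=1\}=0$, and therefore $\KW$ has no continuous inverse.

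The first and third parts are essentially soft once the inputs are in place, so I expect the genuine difficulty to be the uniform exponential estimate: upgrading "$\rho(\Lambda)<1$" to "$\|\Lambda^{r}\|\le C\epsilon^{r}$ with $C,\epsilon$ independent of $\FinG$." This is exactly what the spectral-radius and operator-norm bounds of \cite{Lis}, under the regularity hypothesis \eqref{eq:isoradialcond}, are designed to deliver; Gelfand's formula alone yields only non-uniform constants, so one really has to lean on the resolvent/operator-norm control from \cite{Lis}. A secondary subtlety is justifying the strict inequality $\rho(\Lambda)<1$ for critical weights on finite subgraphs: one knows $\det\KW=Z^{2}>1$ excludes the eigenvalue $1$, and the remaining possibility of other unit-modulus eigenvalues must be ruled out by combining the \cite{Lis} bound $\rho(\Lambda_{\InfG})\le1$ with the compression relation $\Lambda_{\FinG}=P\Lambda_{\InfG}P$ and a strictness argument valid for finite $\FinG$.
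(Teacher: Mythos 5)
Your overall architecture — reduce everything to convergence of the Neumann series $\sum_r \Lambda^r$, identify $(\Lambda^r)_{\ve,\vg}$ with the length-$r$ walk sums via \eqref{eq:stepfact}, get the supercritical decay from an operator-norm bound $\|\Lambda^r\|\le C\epsilon^r$ that is uniform over subgraphs, and disprove boundedness of the critical full-plane inverse by a boundary-over-volume argument applied to $S$ of indicator functions — is exactly the paper's. The supercritical part and the non-invertibility part are essentially correct as sketched (the paper realizes the norm bound concretely by conjugating $\Lambda$ with the diagonal operator $D$ of $\sqrt{x_\ue(\beta)}$'s and invoking Corollary 3.2 of \cite{Lis}, which yields $\|D^{-1}\Lambda D\|\le\epsilon$ with $\epsilon=\sup_\ue x_\ue(\beta)/x_\ue(1)<1$; uniformity in $\FinG$ is then automatic because the vertex-wise $\arctan$ condition only gets easier on subgraphs).

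The genuine gap is the step you yourself flag as a ``secondary subtlety'': proving $\rho(\Lambda)<1$ for \emph{critical} weights on a \emph{finite} subgraph. Your proposed route does not close it. Knowing $\det\KW=Z^2>1$ only excludes the single point $1$ from the spectrum of $\Lambda$, whereas Neumann convergence needs the whole spectrum inside the open unit disc; and the compression relation $\Lambda_{\FinG}=P\Lambda_{\InfG}P$ combined with $\rho(\Lambda_{\InfG})\le 1$ gives only $\rho(\Lambda_{\FinG})\le 1$ — compressions do not generically have strictly smaller spectral radius, and since $\Lambda$ is complex-valued there is no Perron--Frobenius-type strict monotonicity to invoke. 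The ``strictness argument valid for finite $\FinG$'' is precisely the missing content, and it is not soft: the paper supplies it by constructing, via induction on the graph distance to the set $V_0$ of vertices whose degree drops in $\FinG$, a system of \emph{directed} edge weights $\dx$ with $\dx_{\ve}\dx_{-\ve}=\tan(\theta_\ue/2)$ and $\sum_{\ve\in\Out(z)}\arctan(|\dx_\ve|^2/s)=\pi/2$ solved by some $s=\xi_z(\dx)<1$ at every vertex — strictness holds automatically at $V_0$ and is propagated inward by slightly increasing $\dx_\ve$ and compensating on $\dx_{-\ve}$ — after which Corollary 3.3 of \cite{Lis} delivers $\rho(\Lambda)<1$. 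Finiteness of $\FinG$ is what makes this induction terminate; without some such mechanism your argument for the critical finite-volume walk expansion is incomplete.
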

Section~\ref{sec:proowalkrepresentation} is devoted to the proof of this result.
Note that this theorem and Corollary~\ref{cor:Tinverse} provide a natural definition 
of the supercritical fermionic generating function on infinite isoradial graphs. Furthermore, 
the critical fermionic observable on finite graphs also admits a representation in terms of non-backtracking walks. 

We would like to mention that one could also consider Kac-Ward operators with \emph{subcritical} weights on the dual graph $\Gamma^*$, i.e.\ weights given by $x_{\ue^*} = \exp(-2\beta J_{\ue})$, where
$\tanh J_{\ue}=\tan (\theta_{\ue}/2)$ and $\beta >1$. Since~$\Gamma^*$ is also isoradial, the corresponding analysis would be 
similar due to the Kramers-Wannier duality of the planar Ising model (see \cites{KLM,Lis}).

As mentioned in the introduction, the picture that Theorem \ref{thm:walkrepresentation} presents
matches the one of the random walk representation of the inverse Laplacian~\cite{BFS} on the square lattice.
Indeed, the inverse of the Laplacian in finite volume is given by the random walk Green's function. Off criticality, i.e.\ when the Laplacian is massive,
the Green's function decays exponentially fast with the distance between two vertices. As a result, the inverse of the massive
operator in the whole plane exists and is continuous. On the other hand, the full-plane massless Laplacian does not have a bounded inverse.

The crucial difference between these two representations seems to be the fact that the weights of walks induced by the Laplacian are positive, 
and therefore yield a measure, whereas the Kac-Ward weights for the non-backtracking walks are complex-valued. 
In particular, in Theorem~\ref{thm:walkrepresentation},
we have to group the walks by length. Otherwise, the series may diverge. On the other hand, this is not an issue
in the random walk representation.

\section{Proofs of main results} \label{sec:proofs}
\subsection{Proof of Theorem \ref{thm:fermionwalks}}
\label{sec:prooffermionwalks}
\subsubsection*{Cancellations of signed weights}

We already stated Lemma \ref{lem:zerosuminverse} as the simplest manifestation of the cancellations of signed weights of the non-backtracking walks.
For the proof of Theorem \ref{thm:fermionwalks}, we will also need two slightly more difficult consequences of property \eqref{eq:inverted}.
To this end, for $\ve,\vg \in \dG$, let $\WalkV(\ve,\vg) \subset \WalkW(\ve,\vg) $ be the collection of walks which go through $\ue$ 
exactly once, and if $\ue \neq \ug$, do not go through $\ug$ (recall from Section~\ref{sec:walks} what is meant for a walk to go through an edge). Note that $(\ve) \notin \WalkV(\ve,\ve)$.
Also, let $\WalkU(\ve,\vg) \subset \WalkW(\ve,\vg)$ be the collection of walks which do not go through $-\ve$ and $-\vg$. 
Note that $\WalkU(\ve,-\ve) =\emptyset$ and $(\ve) \in \WalkU(\ve,\ve)$. When necessary, we will denote the dependence of these 
collections on the underlying graph $\FinG$ in the subscripts, e.g. we will write $ \WalkW_{\FinG}(\ve,\vg)$. 

The first property says that the closed walks, which go through their starting edge in both directions, do not contribute
to the total sum of weights.
\begin{lemma} \label{lem:sumsloops}  For any $\ve\in \dG$,
\[
	\sum_{\Walk \in \WalkW(\ve,\ve)} w(\Walk)=\sum_{\Walk \in \WalkU(\ve,\ve)} w(\Walk) = \Big(1-\sum_{\Walk \in \WalkV(\ve,\ve)} w(\Walk)\Big)^{-1}.
\]
\end{lemma}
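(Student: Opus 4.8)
The plan is to decompose every walk in $\WalkW(\ve,\ve)$ according to the structure of its first "excursions" that use $\ue$ or $-\ue$, and then show via the cancellation identity \eqref{eq:inverted} that everything except the $\WalkU(\ve,\ve)$-part cancels, while $\WalkU(\ve,\ve)$ itself satisfies a renewal-type identity. First I would establish the second equality. A walk $\Walk\in\WalkU(\ve,\ve)$ starts at $\ve$ and returns to $\ve$ without ever traversing $-\ve$; cutting it at the successive times it returns to the directed edge $\ve$ writes it uniquely as a concatenation of "irreducible" pieces, each of which is a walk from $\ve$ to $\ve$ that visits $\ve$ only at its endpoints and never uses $-\ve$. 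Because the turning angle is additive along concatenations, the signed weight factorizes as a product of the weights of the pieces (note the shared edge $\ue$ between consecutive pieces is counted correctly since the last edge of a walk carries no edge-weight). An irreducible piece is either $\Walk=(\ve)$ itself — wait, $(\ve)\notin\WalkV(\ve,\ve)$, and it is the empty return; more precisely a nonempty irreducible piece goes through $\ue$ exactly once (at the start) and not again, hence lies in $\WalkV(\ve,\ve)$. Summing the geometric series over all finite concatenations of such pieces gives $\sum_{\Walk\in\WalkU(\ve,\ve)}w(\Walk)=\bigl(1-\sum_{\Walk\in\WalkV(\ve,\ve)}w(\Walk)\bigr)^{-1}$, which is the second equality. (Absolute convergence is guaranteed by the standing assumption that $\|x\|_\infty$ is small enough.)

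For the first equality I would argue that the extra walks in $\WalkW(\ve,\ve)$ — those that do go through $-\ve$ at least once — cancel in pairs. Given such a walk $\Walk=(\ve_0,\dots,\ve_n)$ with $\ve_0=\ve_n=\ve$, let $k$ be the \emph{first} index $i\in\{0,\dots,n-1\}$ with $\ve_i=-\ve$ (there is no such $i$ iff $\Walk\in\WalkU(\ve,\ve)$, using also that $\ve_n=\ve\neq-\ve$). Split $\Walk$ at time $k$ into a head $\Walk'=(\ve_0,\dots,\ve_k)$, which lies in $\WalkW(\ve,-\ve)$ and avoids $-\ve$ before its final edge, and a tail $\Walk''=(\ve_k,\dots,\ve_n)\in\WalkW(-\ve,\ve)$. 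The involution is to reverse the tail: $\Walk''\mapsto -\Walk''\in\WalkW(-\ve,\ve)$, equivalently replace $\Walk$ by $\Walk'$ followed by $-\Walk''$. By the first line of \eqref{eq:inverted}, $w(-\Walk'')=-w(\Walk'')$ (since $-\Walk''\in\WalkW(\ve,-\ve)$), while the head weight $w(\Walk')$ is unchanged; by step-factorization \eqref{eq:stepfact} the weight of the concatenation is the product of the two weights (the junction edge $-\ue$ contributes its $x$-weight once, from the head, consistently in both cases), so the two walks have opposite weights. The map is an involution because reversing the tail twice gives back the tail, and it has no fixed points since a fixed point would need $\Walk''=-\Walk''$, impossible for a nonempty non-backtracking walk. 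Hence all contributions from walks hitting $-\ve$ cancel, leaving $\sum_{\WalkW(\ve,\ve)}w(\Walk)=\sum_{\WalkU(\ve,\ve)}w(\Walk)$.

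The main obstacle I expect is making the "split at the first visit to $-\ve$ and reverse the tail" involution genuinely well-defined and weight-reversing: one must check that after reversal $k$ is still the first index at which the new walk hits $-\ve$ (it is, because the head was unchanged and still avoids $-\ve$ before position $k$, and the new tail $-\Walk''$ starts at $-\ve$), that non-backtracking is preserved at the junction (automatic, since $\Walk'$ already stepped into $\ve_k=-\ve$ legally and $-\Walk''$ is just that same walk read backwards starting from $-\ve$), and that no double counting of the edge-weight of $\ue$ occurs across the cut. A secondary point is the bookkeeping in the renewal argument for $\WalkU$: one must confirm that the irreducible pieces are exactly $\WalkV(\ve,\ve)$ — in particular that such a piece never uses $-\ve$ (true, since $\Walk\in\WalkU(\ve,\ve)$ forbids $-\ve$ throughout) and goes through $\ue$ exactly once (true, since a second visit to $\ue=\ue_i$ would force $\ve_i\in\{\ve,-\ve\}$, and $\ve_i=\ve$ would contradict irreducibility while $\ve_i=-\ve$ is forbidden). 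Once these verifications are in place, both equalities follow, and the absolute convergence needed to manipulate the series is exactly the regime $\|x\|_\infty<1/(\Delta-1)$ assumed in the paper.
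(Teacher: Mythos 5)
Your proof is correct and follows essentially the same two-step strategy as the paper's: a fixed-point-free sign-reversing involution built from walk reversal and \eqref{eq:inverted} to cancel the walks that hit $-\ve$, followed by the renewal decomposition of $\WalkU(\ve,\ve)$ into concatenations of $\WalkV(\ve,\ve)$-pieces and a geometric series. The only (harmless) difference is the choice of involution: the paper reverses the segment between the last visit to $\ve$ preceding the first visit to $-\ve$ and that first visit, whereas you reverse the entire tail after the first visit to $-\ve$; both are valid.
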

\begin{proof}
If $\mathcal{A}=\WalkW(\ve,\ve) \setminus \WalkU(\ve,\ve)$ is empty, then the first equality holds true.
Otherwise, take $\Walk=(\ve_0,\ldots,\ve_n) \in \mathcal{A}$ and note that $\Walk$ goes through $-\ve$. Let~$l$ be the smallest index such 
that $\ve_l=-\ve$, and let $k$ be the largest index smaller than $l$ such that $\ve_k=\ve$. We define a map $\Walk \mapsto \Walk'$ by
\[
\Walk'=(\ve_0,\ldots,\ve_{k-1},-\ve_l,-\ve_{l-1},\ldots,-\ve_k,\ve_{l+1},\ldots,\ve_n).
\]
It follows that $\Walk' \in \mathcal{A}$ and $(\Walk')'=\Walk$. 
By \eqref{eq:stepfact} and  \eqref{eq:inverted}, we see that $w(\Walk)=-w(\Walk')$, and therefore the sum of signed weights over $\mathcal{A}$ is zero.
To prove the second equality, observe that $\WalkU(\ve,\ve)$ maps bijectively to the space of finite sequences of walks from $\WalkV(\ve,\ve)$. Indeed, $(\ve)$ 
corresponds to the empty sequence of walks, and for $\Walk=(\ve_0,\ldots,\ve_n) \in \WalkU(\ve,\ve)$ of positive length, let $0=l_0<l_1<\ldots<l_m=n$ 
be the consecutive times when $\Walk$ visits~$\ve$, i.e.\ $\ve_{l_i} = \ve$ for 
$i \in \{0,\ldots,m\}$. Note that $\Walk_i = (\ve_{l_i},\ldots, \ve_{l_{i+1}}) \in \WalkV(\ve,\ve) $ for $i \in \{0,\ldots,m-1\}$.
It follows from~\eqref{eq:stepfact} that $w(\Walk) = \prod_{i=0}^{m-1} w(\Walk_{i})$. Hence, the sum of weights of all walks from $\WalkU(\ve,\ve)$,
which split into exactly $m$ walks from $\WalkV(\ve,\ve)$, equals the $m$th power of the sum of weights of all walks from $\WalkV(\ve,\ve)$.
Using the power series expansion of $(1-t)^{-1}$, we finish the proof.
\end{proof}

The second observation is that, when counting weights of walks going from $\ve$ to $\vg$, it is enough to look at these walks,
 which visit $\ue$ for the last time in the direction of $\ve$,
and afterwards visit $\ug$ for the first time in the direction of $\vg$.
\begin{lemma} \label{lem:walkfact}
For any $\ve,\vg \in \dG$ such that $\ue \neq \ug$,
\[
\sum_{\Walk \in \WalkW_{\FinG}(\ve,\vg)} w(\Walk)= \sum_{\Walk \in \WalkW_{\FinG}(\ve,\ve)} w(\Walk)  \sum_{\Walk \in \WalkV_{\FinG}(\ve,\vg)} 
w(\Walk)   \sum_{\Walk \in \WalkW_{\FinG \setminus \{\ue \}}(\vg,\vg)} w(\Walk).
\]
\end{lemma}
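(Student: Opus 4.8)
The plan is to single out a subcollection $\WalkW^{*}(\ve,\vg)\subseteq\WalkW_{\FinG}(\ve,\vg)$ of ``good'' walks --- those whose last step along $\ue$ is in the direction $\ve$ and which, among the steps following that one, traverse $\ug$ for the first time in the direction $\vg$ --- to show, using the cancellations coming from \eqref{eq:inverted}, that the total signed weight of $\WalkW_{\FinG}(\ve,\vg)$ equals that of $\WalkW^{*}(\ve,\vg)$, and finally to exhibit a weight-preserving bijection between $\WalkW^{*}(\ve,\vg)$ and the product $\WalkW_{\FinG}(\ve,\ve)\times\WalkV_{\FinG}(\ve,\vg)\times\WalkW_{\FinG\setminus\{\ue\}}(\vg,\vg)$. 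Since all sums are taken under the standing assumption that $\|x\|_{\infty}$ is small enough for absolute convergence, regrouping and factoring the series are legitimate throughout.

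First I would fix $\Walk=(\ve_{0},\dots,\ve_{n})\in\WalkW_{\FinG}(\ve,\vg)$ and let $j$ be the largest index with $\ue_{j}=\ue$; it exists since $\ve_{0}=\ve$, and $j\le n-1$ since $\ue_{n}=\ug\neq\ue$. If $\ve_{j}=\ve$, let $l$ be the smallest index $>j$ with $\ue_{l}=\ug$; it exists since $\ue_{n}=\ug$. As $\ve_{j}\in\{\ve,-\ve\}$, and in the second case $\ve_{l}\in\{\vg,-\vg\}$, this splits $\WalkW_{\FinG}(\ve,\vg)$ into three classes: $\WalkW^{*}$ (where $\ve_{j}=\ve$ and $\ve_{l}=\vg$); the class $\mathcal{A}$ where $\ve_{j}=-\ve$; and the class $\mathcal{B}$ where $\ve_{j}=\ve$ but $\ve_{l}=-\vg$ (so that $l\le n-1$). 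To get $\sum_{\Walk\in\mathcal{A}}w(\Walk)=\sum_{\Walk\in\mathcal{B}}w(\Walk)=0$ I would build two sign-reversing involutions. For $\Walk\in\mathcal{A}$ the initial segment $(\ve_{0},\dots,\ve_{j})$ lies in $\WalkW(\ve,-\ve)$; replacing it by its reversal and splicing back gives a map $\Walk\mapsto\Walk^{*}$ with, by \eqref{eq:stepfact} and the first case of \eqref{eq:inverted}, $w(\Walk^{*})=-w(\Walk)$. For $\Walk\in\mathcal{B}$ I would instead reverse the final segment $(\ve_{l},\dots,\ve_{n})$, which lies in $\WalkW(-\vg,\vg)$, again a set covered by the first case of \eqref{eq:inverted}, so that $w(\Walk^{*})=-w(\Walk)$. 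Each map is plainly an involution, and it has no fixed point because a non-backtracking walk of positive length cannot equal its own reversal. Hence $\sum_{\Walk\in\WalkW_{\FinG}(\ve,\vg)}w(\Walk)=\sum_{\Walk\in\WalkW^{*}}w(\Walk)$.

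For the factorization, given $\Walk\in\WalkW^{*}$ I would cut it at $j$ and $l$ into the triple $\big((\ve_{0},\dots,\ve_{j}),\,(\ve_{j},\dots,\ve_{l}),\,(\ve_{l},\dots,\ve_{n})\big)$. By the choice of $j$ the first piece is a walk in $\WalkW_{\FinG}(\ve,\ve)$; by the choice of $j$ and $l$ the middle piece meets $\ue$ only at its first step and never meets $\ug$, hence lies in $\WalkV_{\FinG}(\ve,\vg)$; and the last piece uses no edge equal to $\ue$, hence lies in $\WalkW_{\FinG\setminus\{\ue\}}(\vg,\vg)$. Conversely, the concatenation of any such triple lies in $\WalkW^{*}$: at the two splicing edges $\ve$ and $-\vg$ the non-backtracking condition holds automatically, since each splicing edge sits at the very end of one piece and the very start of the next, where the relevant inequalities are already built in. As $w$ is multiplicative over steps by \eqref{eq:stepfact}, this bijection turns $\sum_{\Walk\in\WalkW^{*}}w(\Walk)$ into the product of the three series on the right-hand side of the lemma, completing the proof.

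The main obstacle --- essentially the only point needing care --- is the bookkeeping in the cancellation step: one must check that reversing-and-splicing sends $\mathcal{A}$ into $\mathcal{A}$ and $\mathcal{B}$ into $\mathcal{B}$, i.e.\ that this surgery moves neither the last visit to $\ue$ nor the first subsequent visit to $\ug$, and that the involutions have no fixed points. The sign change is immediate from \eqref{eq:inverted}, and once the good walks are isolated the factorization is a routine unwinding of the definitions.
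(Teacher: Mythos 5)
Your proof is correct and follows essentially the same route as the paper: cut each walk at its last visit to $\ue$ and the first subsequent visit to $\ug$, cancel the contributions where these visits occur in the wrong direction, and factor the surviving walks into the three pieces. The paper obtains the cancellations by invoking Lemma~\ref{lem:zerosuminverse} for the head (resp.\ tail) segments with the rest of the walk held fixed, which is exactly your sign-reversing involutions made implicit; the only blemish is the harmless slip where you call the second splicing edge $-\vg$ instead of $\vg$.
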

\begin{proof}
Again, if $\WalkW_{\FinG}(\ve,\vg)$ is empty, then $\WalkV_{\FinG}(\ve,\vg)$ is also empty and the statement is true.
Otherwise, for each $\Walk =(\ve_0,\ldots,\ve_n) \in \WalkW_{\FinG}(\ve,\vg)$, let $k$ be the largest index such that $\ue_k = \ue$, 
and let $l$ be the smallest index larger than $k$ such that $\ue_l =\ug$. We define $\Walk_{\ue\ue}= (\ve_0,\ldots,\ve_k)$, $\Walk_{\ue\ug} = (\ve_{k},\ldots,\ve_{l})$ and
$\Walk_{\ug\ug} = (\ve_l,\ldots,\ve_n)$. By \eqref{eq:stepfact}, we have that $w(\Walk)=w(\Walk_{\ue\ue})w(\Walk_{\ue\ug})w(\Walk_{\ug\ug})$. It follows from Lemma \ref{lem:zerosuminverse}
that the contribution of the walks $\Walk$ such that $\Walk_{\ue\ue} \in \WalkW_{\FinG}(\ve,-\ve)$ to the sum on the left-hand side of the desired equality is zero.
The same holds for the walks $\Walk$ with $\Walk_{\ug\ug} \in \WalkW_{\FinG \setminus  \{\ue \}}(-\vg,\vg)$.
Therefore, the only walks $\Walk$ that contribute to the sum satisfy $\Walk_{\ue \ue} \in \WalkW_{\FinG}(\ve,\ve)$, $\Walk_{\ue \ug} \in \WalkV_{\FinG}(\ve,\vg)$ 
and $\Walk_{\ug \ug} \in \WalkW_{\FinG \setminus  \{\ue \}}(\vg,\vg)$. 
\end{proof}

Note that $\WalkV_{\FinG}(\ve,\vg)$ may be empty even when $\WalkW_{\FinG}(\ve,\vg)$ is nonempty.

\subsubsection*{Dependence of Z on the graph}

The next result expresses a multiplicative relation between the generating functions 
of even subgraphs of $\FinG$ and $\FinG \setminus  \{\ue \}$ for some edge $\ue$. 
We will write $Z_{\FinG}$ to express the dependence of $Z$ on the graph~$\FinG$. 
\begin{corollary} \label{cor:Z} For any $\ve \in \dG$,
\begin{align*}
Z_{\FinG} = \Big(1-\sum_{\Walk \in \WalkV_{\FinG} (\ve,\ve) } w(\Walk) \Big) Z_{\FinG \setminus  \{\ue \} }.
\end{align*}
\end{corollary}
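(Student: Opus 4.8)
The plan is to use Theorem~\ref{thm:main}, which expresses $Z_{\FinG}$ as $\exp\big(-\sum_{\Walk} w(\Walk)/(2|\Walk|)\big)$, together with the already-established Lemma~\ref{lem:sumsloops}, to reduce everything to a statement about closed walks. First I would fix $\ve\in\dG$ and write $\log Z_{\FinG} = -\sum_{\Walk \text{ closed in } \FinG} w(\Walk)/(2|\Walk|)$ and similarly for $\FinG\setminus\{\ue\}$. Since every closed walk in $\FinG\setminus\{\ue\}$ is a closed walk in $\FinG$ not going through $\ue$, the difference $\log Z_{\FinG} - \log Z_{\FinG\setminus\{\ue\}}$ is the sum of $-w(\Walk)/(2|\Walk|)$ over closed walks $\Walk$ in $\FinG$ that \emph{do} go through $\ue$. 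So I must show
\begin{align*}
\exp\Big(-\sum_{\substack{\Walk \text{ closed in } \FinG \\ \Walk \text{ goes through } \ue}} \frac{w(\Walk)}{2|\Walk|}\Big) = 1 - \sum_{\Walk\in\WalkV_{\FinG}(\ve,\ve)} w(\Walk).
\end{align*}

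The key step is to relate the logarithmic sum over closed walks through $\ue$ to the sum over $\WalkV_{\FinG}(\ve,\ve)$, the walks through $\ue$ exactly once in the direction $\ve$ and not through $\ug=\ue$ otherwise. The natural device is: a closed walk through $\ue$ corresponds, after choosing a starting point, to a sequence of ``excursions from $\ue$'', i.e.\ walks in $\WalkV_{\FinG}(\ve,\ve)$ or in $\WalkV_{\FinG}(-\ve,-\ve)$; but by the cancellation Lemma~\ref{lem:zerosuminverse} / Lemma~\ref{lem:sumsloops} the terms involving $-\ve$ drop out, leaving only concatenations of $\WalkV_{\FinG}(\ve,\ve)$-blocks. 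The combinatorial fact that the exponential generating function of cyclic sequences (necklaces) of blocks with per-block weight-sum $S := \sum_{\Walk\in\WalkV_{\FinG}(\ve,\ve)} w(\Walk)$ is $-\log(1-S)$ — where the factor $1/(2|\Walk|)$ on the left is exactly the necklace-counting normalization ($1/|\Walk|$ for cyclic rotations, $1/2$ because $w(\Walk)=w(-\Walk)$ for $\Walk\in\WalkW(\ve,\ve)$ by \eqref{eq:inverted}, giving a two-fold overcount) — then gives $\sum w(\Walk)/(2|\Walk|) = -\log(1-S)$, hence the exponential equals $1-S$, as desired. Alternatively, and perhaps more cleanly, I would first establish the non-backtracking-walk version $Z_{\FinG} = \sum_{\Walk\in\WalkW_{\FinG}(\ve,\ve)} w(\Walk)\cdot Z_{\FinG\setminus\{\ue\}} \cdot(\text{correction})$ is wrong-headed; instead derive it by the analyticity/power-series trick as in the proof of Corollary~\ref{cor:Tinverse}: both sides of the claimed identity are analytic in a scaling parameter $t$ on $(0,\infty)$ (using that $Z$ never vanishes on a crossingless graph — but here $\FinG$ may have crossings, so I should instead just work for $\|x\|_\infty$ small and then note that $Z_\FinG$ and $Z_{\FinG\setminus\{\ue\}}$ and the finite sum $S$ are all polynomials in the $x_{\ue}$, so a polynomial identity valid on an open set holds identically).

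Thus the cleanest route: prove the identity for $\|x\|_\infty < 1/(\Delta-1)$ using absolute convergence, via the necklace/block decomposition of closed walks through $\ue$ and Lemma~\ref{lem:sumsloops}'s cancellation, then extend by the polynomial-identity principle since all three quantities $Z_{\FinG}$, $Z_{\FinG\setminus\{\ue\}}$, $\sum_{\WalkV_{\FinG}(\ve,\ve)} w(\Walk)$ are finite (polynomial) expressions in the edge weights when $\FinG$ is finite.

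\textbf{The main obstacle} I anticipate is making the necklace/block bijection precise while correctly tracking the signed weights and the $1/(2|\Walk|)$ factor: one must argue that a closed walk through $\ue$, with a marked visit to $\ue$, decomposes uniquely into an ordered cyclic sequence of $\WalkV_{\FinG}(\ve,\ve)$-blocks \emph{after} the $-\ve$-excursions have cancelled, and that summing $1/|\Walk|$ over the cyclic choices of marked block together with the $1/2$ from the $\ve\leftrightarrow-\ve$ symmetry reproduces exactly the coefficient $1/(2|\Walk|)$ in Theorem~\ref{thm:main}. This is the same style of argument used to prove Theorem~\ref{thm:main} itself in \cite{KLM}, so it should go through, but the bookkeeping with the direction of the marked visit (it should be $\ve$, not $-\ve$, and the cancellation lemma is what makes this unambiguous) is the delicate point.
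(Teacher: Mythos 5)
Your plan is viable in principle but takes a genuinely different, and considerably harder, route than the paper. The paper's proof exploits the fact that $Z_{\FinG}$, as a sum over even subgraphs each containing $\ue$ at most once, is an \emph{affine} function of $x_{\ue}$, so that $Z_{\FinG}=Z_{\FinG\setminus\{\ue\}}+x_{\ue}\,\frac{\partial}{\partial x_{\ue}}Z_{\FinG}\big|_{x_{\ue}=0}$. Computing that derivative at $x_{\ue}=0$ through the exponential formula of Theorem~\ref{thm:main} automatically annihilates every closed walk visiting $\ue$ more than once; the survivors (closed walks through $\ue$ exactly once) are in an exact $2|\Walk|$-to-one correspondence with $\WalkV_{\FinG}(\ve,\ve)$ via rotation and reversal, which cancels the $1/(2|\Walk|)$ and gives the factor $-\sum_{\Walk\in\WalkV_{\FinG}(\ve,\ve)}w(\Walk)$ multiplying $Z_{\FinG\setminus\{\ue\}}$. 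No cancellation lemma is needed at all. You instead keep every closed walk through $\ue$ in $\log Z_{\FinG}-\log Z_{\FinG\setminus\{\ue\}}$ and must prove the necklace identity $-\sum w(\Walk)/(2|\Walk|)=\log\bigl(1-S\bigr)$; that identity is true, and your accounting of the $1/(2|\Walk|)$ factor against rotations and reversal is correct, but the step you yourself flag is a genuine gap rather than mere bookkeeping. A closed walk may visit $\ue$ in both directions, so its excursions come in four types ($\ve\to\ve$, $\ve\to-\ve$, $-\ve\to\ve$, $-\ve\to-\ve$), and neither Lemma~\ref{lem:zerosuminverse} nor Lemma~\ref{lem:sumsloops} applies as stated: both concern walks rooted at the fixed edge $\ve$, whereas the sum in Theorem~\ref{thm:main} runs over closed walks rooted anywhere and weighted by $1/(2|\Walk|)$. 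To make the mixed-direction terms cancel you need a weight-negating involution that is compatible with rotation and reversal (so that it descends to the length-normalized sum), or equivalently you must first pass to loops and redo the surgery there; this is essentially the loop-level argument of \cite{KLM} and is the entire content of the proof, not a detail.

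A second, smaller error: $S=\sum_{\Walk\in\WalkV_{\FinG}(\ve,\ve)}w(\Walk)$ is \emph{not} a polynomial in the edge weights, since walks in $\WalkV_{\FinG}(\ve,\ve)$ may traverse edges other than $\ue$ arbitrarily often; it is an infinite power series. So the ``polynomial-identity'' extension in your closing paragraph does not apply. This is harmless only because the corollary is anyway asserted in the regime where $\|x\|_{\infty}$ is small enough for absolute convergence, where your necklace computation, once completed, would already suffice; but the claim that all three quantities are polynomials should be dropped.
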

\begin{proof}
By \eqref{eq:defZ}, $Z$ is a sum of monomials in $x_{\ue}$, and therefore
\begin{align*}
Z_{\FinG}=Z_{\FinG \setminus \{\ue \} }+ x_{\ue}\frac{\partial}{\partial x_{\ue}} Z_{\FinG} \big|_{x_{\ue}=0}.
\end{align*}
To compute the partial derivative of $Z_{\FinG}$, we use the exponential formula from Theorem~\ref{thm:main}.
To justify why we obtain the sum over $\WalkV_{\FinG} (\ve,\ve)$, we make two observations: the only closed walks that survive the evaluation $x_{\ue}=0$ go through $\ue$ 
exactly once, and to each $\Walk \in \WalkV_{\FinG} (\ve,\ve) $ there correspond exactly $2|\Walk|$ closed walks with the same signed weight as $\Walk$, and which 
define the same, up to a time parametrization, closed curve in the plane.
Since putting $x_{\ue}=0$ is equivalent to removing $\ue$ from $\FinG$, we use Theorem~\ref{thm:main} again to express the exponential as $Z_{\FinG \setminus  \{\ue \}}$. 
Note that by~\eqref{eq:defZ} the partial derivative is actually constant in $x_{\ue}$. We still chose to evaluate it at zero since the fact that it does not depend on $x_{\ue}$ 
is not apparent when differentiating the exponential formula.
\end{proof}

\subsubsection*{Proof of Theorem \ref{thm:fermionwalks}}
\begin{proof}
The case $\vg = - \ve$ follows from Lemma \ref{lem:zerosuminverse} and the fact that $F_{\ve,-\ve}=0$. Next,
suppose that $ \vg = \ve $ and take $\FinH \in \fE (\ve,\ve)$. As mentioned before, $\FinH$ can be thought of as an even subgraph of $\FinG$ containing $\ue$.
It follows that the left-most path $\Walk_{\FinH}$ goes along the boundary of the (possibly unbounded) face of~$\FinH$ which lies on the 
left-hand side of $\ve$. It means that it does not have any self-crossings and therefore, by Theorem \ref{thm:Whitney}, $e^{\frac{i}{2} \alpha(\Walk_{\FinH})} = -1$. 
It follows from \eqref{eq:defZ} and \eqref{eq:deffermion} that
\begin{align*}
\overline{F_{\ve,\ve}} = 1 - \frac{1}{Z_{\FinG}}\mathop{\sum_{e \in \FinH \subset \FinG}}_{ \FinH \text{ even} } \prod_{\ug \in \FinH} x_{\ug} =  \frac{ Z_{\FinG \setminus \{ \ue \}}} {Z_{\FinG}}.
\end{align*}
Hence by Corollary \ref{cor:Z} and Lemma \ref{lem:sumsloops},
\begin{align} \label{eq:eecase}
\overline{F_{\ve,\ve}} =  \frac{ Z_{\FinG \setminus \{ \ue \}}} {Z_{\FinG}} = \Big(1-\sum_{\Walk \in \WalkV_{\FinG} (\ve,\ve) } w(\Walk) \Big)^{-1}= \sum_{\Walk \in \WalkW_{\FinG}(\ve,\ve)} w(\Walk).
\end{align}

The last case is when $\ue \neq \ug$. Let $\FinH \in \fE (\ve,\vg)$ and $\vgamma = (m(\ug), m(\ue))$. We put $x_{\ugamma}=1$. Without loss of generality we assume that $\FinH \cup \{\ugamma\}$ 
satisfies the definition of a graph, i.e.\ no vertices of $\FinH$ lie on $\gamma$. Indeed, if this is not the case, then we can add two edges 
$\ugamma_1 = \{m(\ue),v\}$ and $\ugamma_2=\{v, m(\ug)\}$, for some suitably chosen vertex $v$ (see Figure \ref{fig1}). The rest of the proof can 
be easily adjusted to this situation. Note that $\FinH \cup \{\ugamma\}$ is an even subgraph of~$\FinG_{\ve,\vg} \cup \{\ugamma\}$.

Let $\Walk^{\circ}_{\FinH}$ be the closed path that starts at $\vgamma$ and then agrees with $\Walk_{\FinH}$ until it goes back to $\vgamma$. 
We claim that
\begin{align} \label{eq:equalsign}
(-1)^{C(\FinH \cup \{\ugamma\})}= (-1)^{C(\Walk^{\circ}_{\FinH})}= -e^{\frac{i}{2}\alpha(\Walk^{\circ}_{\FinH})}= -e^{\frac{i}{2}(\alpha(\Walk_{\FinH})+\beta)},
\end{align}
where $\beta= \angle(\vg,\vgamma) + \angle(\vgamma,\ve) $. The second equality is a consequence of Theorem~\ref{thm:Whitney}, and the last 
one follows directly from the definitions \eqref{eq:defangle} and \eqref{eq:defweight}.
Since $\FinH$ is embedded in the plane without edge-crossings, ${C(\FinH \cup \{\ugamma\})}$ is the number of edges in $\FinH$ which are 
crossed by $\gamma$. Similarly, since $\Walk_{\FinH}$ always makes a step to the left-most edge, $\Walk^{\circ}_{\FinH}$ does not have any 
self-crossings at the vertices of $\FinH \cup \{\ugamma\}$. Therefore, ${C(\Walk^{\circ}_{\FinH})}$ is equal to the number of edges 
in~$\ES(\Walk^{\circ}_{\FinH})$ which cross $\gamma$. 
What is left to prove, is that the number of edges in~$\FinH \setminus \ES(\Walk^{\circ}_{\FinH})$ which are crossed by $\gamma$ is even.
To this end, let $\{\Walk_1, \ldots, \Walk_k \}$ be a collection of edge-disjoint closed paths, such that 
$\FinH \setminus \ES(\Walk^{\circ}_{\FinH}) = \bigcup_{i=1}^k \ES(\Walk_i)$. Again, since $\Walk_{\FinH}$ is the left-most path in $\FinH$,
 it is true that $\Walk_{\FinH}^{\circ}$ does not have any crossings with $\Walk_i$, $i=1,\ldots,k$, at the
vertices of $\FinH \cup \{\ugamma\}$. It follows that $C(\Walk_{\FinH}^{\circ}, \Walk_i)$ is the number of edges in $\ES(\Walk_i)$ crossed by $\gamma$.
Since $C(\Walk, \Walk')$ is even for any two closed paths $\Walk$ and $\Walk'$, we have established \eqref{eq:equalsign}.

Note that $\FinH \mapsto \FinH \cup \{\ugamma\}$ is a bijection between  $\fE (\ve,\vg)$ and the collection of even subgraphs of $\FinG_{\ve,\vg} \cup \{\ugamma\}$ which contain $\gamma$.
Similarly to the previous case, from \eqref{eq:equalsign}, \eqref{eq:defZ} and \eqref{eq:deffermion}, it follows that
\begin{align} \label{eq:FZ}
 e^{\frac{i}{2}\beta}  \overline{F_{\ve,\vg}} &=\frac{Z_{\FinG_{\ve,\vg}}-Z_{\FinG_{\ve,\vg} \cup \{\ugamma\}} }{Z_{\FinG}} = \Big( 1- \frac{Z_{\FinG_{\ve,\vg}\cup \{\ugamma\}}}{ Z_{\FinG_{\ve,\vg}} }\Big) 
\frac{Z_{\FinG \setminus \{ \ue, \ug\} }}{Z_{\FinG \setminus \{ e \}}} \frac{Z_{\FinG \setminus \{\ue\} }}{Z_{\FinG}},
\end{align}
where we also used the fact that $Z_{\FinG_{\ve,\vg}} = Z_{\FinG \setminus \{ \ue, \ug\}}$.
Using Corollary \ref{cor:Z}, we get
\begin{align*}
1- \frac{Z_{\FinG_{\ve,\vg} \cup \{\ugamma\}}}{ Z_{\FinG_{\ve,\vg}} } = \sum_{ \Walk \in \WalkV_{\FinG_{\ve,\vg} \cup \{\ugamma\}} (\vgamma,\vgamma) } w(\Walk)  
= e^{\frac{i}{2}\beta}   \sum_{ \Walk \in \WalkV_{\FinG} (\ve,\vg) } w(\Walk).
\end{align*}
Just as in \eqref{eq:eecase}, the remaining ratios of generating functions in \eqref{eq:FZ} can be expressed in terms of walks, and therefore
\begin{align*}
\overline{F_{\ve,\vg}} & = \sum_{ \Walk \in \WalkV_{\FinG} (\ve,\vg) } w(\Walk)\sum_{\Walk \in \WalkW_{\FinG \setminus \{\ue\} }(\vg,\vg)} w(\Walk) 
  \sum_{\Walk \in \WalkW_{\FinG}(\ve,\ve)} w(\Walk) \\
	&= \sum_{\Walk \in \WalkW_{\FinG}(\ve,\vg)} w(\Walk).
\end{align*}
The last equality follows from Lemma \ref{lem:walkfact}.
\end{proof}

\subsection{Proof of Theorem \ref{thm:KWshol}}
\label{sec:proofKWshol}
\begin{proof}
We put $\Sf =\CI f$. Take two consecutive edges~$\ve_1$ and $\ve_2$ from $\In(z)$ ordered counterclockwise around~$z$ and suppose that 
\begin{align} \label{eq:shol0} 
x_{\ue_1}^{-1} \KW  \Sf(\ve_1) =x_{\ue_2}^{-1} e^{\frac{i}{2} \angle(\ve_1,\ve_2)} \KW \Sf (\ve_2) .
\end{align}
By the definition of $\KW $, this is equivalent to
\begin{align*}
&\Sf(\ve_1)x_{\ue_1}^{-1}-e^{\frac{i}{2}\angle(\ve_1,-\ve_2)}\Sf(-\ve_2) -  \sum_{\vg \in \Out(z) \setminus  \{-\ve_1,-\ve_2\}} \mkern-18mu \Sf(\vg) e^{\frac{i}{2} \angle(\ve_1,\vg)} = \\
&e^{\frac{i}{2}\angle(\ve_1,\ve_2)}\Big(\Sf(\ve_2)x_{\ue_2}^{-1}-e^{\frac{i}{2}\angle(\ve_2,-\ve_1)}\Sf(-\ve_1) - 
\sum_{\vg \in \Out(z) \setminus  \{-\ve_1,-\ve_2\}} \mkern-18mu \Sf(\vg) e^{\frac{i}{2} \angle(\ve_2,\vg)}\Big).
\end{align*}
Since the faces of $\InfG$ are convex, $\angle(\ve_1,\ve_2) = \theta_{\ue_1}+\theta_{\ue_2}>0$.
Using basic properties of the complex argument, one obtains that 
$\angle(\ve_1, \ve_2)+ \angle(\ve_2,-\ve_1)= \pi$, and
\begin{align*} 
\angle(\ve_1, \ve_2)+\angle(\ve_2, \vg) = \angle(\ve_1, \vg) \quad \text{for all} \ \vg \in \Out(z) \setminus \{-\ve_1, -\ve_2 \}.
\end{align*}
Combining this with the equation above, gives
\begin{align} \label{eq:shol1}
\Sf(\ve_1)x_{\ue_1}^{-1} + i \Sf(-\ve_1) = e^{\frac{i}{2}\angle(\ve_1,\ve_2)} (\Sf(\ve_2)x_{\ue_2}^{-1} - i\Sf(-\ve_2)),
\end{align}
which using criticality of the weights, yields
\begin{align} \label{eq:shol2}
 &\big(\sin(\theta_{\ue_1}/2)\big)^{-1} e^{-\frac{i}{2} \theta_{\ue_1}} \Big(\Sf(\ve_1) \cos(\theta_{\ue_1}/2) + i \Sf(-\ve_1)\sin(\theta_{\ue_1}/2) \Big) = \\  \nonumber
 &\big(\sin (\theta_{\ue_2}/2)\big)^{-1} e^{\frac{i}{2} \theta_{\ue_2}} \Big( \Sf(\ve_2)\cos( \theta_{\ue_2}/2) - i \Sf(-\ve_2)\sin(\theta_{\ue_2}/2)\Big).
\end{align}
We put 
\[
\ell= e^{-\frac{i} {2}\theta_{\ue_1}}\ell_{\ve_1}= e^{\frac{i} {2} \theta_{\ue_2}} \ell_{\ve_2}  = (z -z^*)^{-\frac{1}{2}} \R,
\] 
where $z^*$ is the dual vertex corresponding to the face lying on the right-hand side of $\ve_1$ and $-\ve_2$.
From basic geometry it follows that 
\begin{align*}
\Proj( z; z e^{i \beta} \R) = z e^{i\beta} \cos \beta \qquad \text{and} \qquad \Proj( z; z i e^{i \beta} \R) = -zi e^{i\beta} \sin \beta 
\end{align*}
for any nonzero complex number $z$ and any real number $\beta$.
This, together with the definition of $S$, the fact that $\Sf(\ve) \in \ell_{\ve}$ and $\ell_{\ve} = i\ell_{-\ve}$, implies that equation \eqref{eq:shol2} takes the form 
\begin{align} \label{eq:shol3}
 \Proj( f(\ue_1) ;  \ell) =  \Proj( f(\ue_2);\ell).
\end{align}
Therefore, condition \eqref{eq:shol0} is equivalent to condition~\eqref{eq:shol3}. 

Now, assume that~\eqref{eq:shol0} holds for all pairs of consecutive edges in 
$\In(z)=\{\ve_1,\ve_2, \ldots, \ve_k\}$. We obtain that 
$\KW \Sf(\ve_1) = e^{\frac{i}{2}\sum_{i=1}^k \angle(\ve_{i}, \ve_{i+1})} \KW \Sf(\ve_1) = - \KW \Sf(\ve_1)$, where $\ve_{k+1}=\ve_1$, and 
hence, $\KW \Sf(\ve)=0$ for all $\ve \in \In(z)$. The opposite implication uses the fact that condition \eqref{eq:shol0} 
is trivially satisfied when $T \Sf(\ve_1) = T \Sf(\ve_2) =0 $.
\end{proof}

\subsection{Proof of Theorem \ref{thm:walkrepresentation}}
\label{sec:proowalkrepresentation}
\subsubsection*{Matrices of operators} \label{sec:Hilbertspace}
If $\FinG$ is finite, then $\ell^2=\ell^2(\dG)$ is a finite dimensional Euclidean space and hence all
automorphisms of $\ell^2$ are continuous and are expressed via matrix multiplication. 
If $\FinG$ is infinite, then $\ell^2$ is an infinite dimensional Hilbert space and
all continuous automorphisms of $\ell^2$ are also given by (infinite) matrix multiplication. To be precise, let $\{i_{\ve} \}_{\ve \in \dG}$, 
where $i_{\ve}(\vg)= \delta_{\ve,\vg}$,
be the standard basis of $\ell^2$ and let $\langle \cdot,\cdot \rangle$ be the inner product in~$\ell^2$. 
If $A$ is a continuous automorphism of~$\ell^2$, then $A_{\ve,\vg} := \langle A i_{\vg},  i_{\ve} \rangle $ are the entries of the associated matrix, 
and $A$ acts via matrix multiplication, i.e.\
\begin{align*}
A \varphi (\ve) = \sum_{\vg \in \dG} A_{\ve,\vg} \varphi(\vg) \qquad \text{for all} \ \varphi \in \ell^2.
\end{align*}
The rows (and also columns) of $A$ belong to $\ell^2$ and hence the order of summation is irrelevant. 
Moreover, the matrix of a composition of two bounded operators is the product of the two matrices of these operators.
Note that the entries of the matrix are 
given in terms of linear functionals. Hence, whenever a sequence of operators converges in the weak topology, 
the entries also converge to the entries of the matrix of the limiting operator.

By $\|A\|$ and $\rho(A)$ we will denote the operator norm and the spectral radius of $A$.
From the theory of Banach algebras, we know that
\begin{align} \label{eq:powerseriesA}
\rho(A) = \lim_{n \rightarrow \infty } \| A^n \| ^{1/n}, \quad \text{and hence} \quad (\Id-A)^{-1} = \sum_{n=0}^{\infty} A^n
\end{align}
if $\rho(A)<1$.
Here, $\Id$ is the identity on $\ell^2$ and the limit is taken in the operator norm.

\subsubsection*{Supercritical case} 
\begin{proof}[Proof of exponential decay]
Let $\FinG$ be any subgraph of $\InfG$ and fix $\beta \in (0,1)$. 
Let the weight system $x(\beta)$ and the coupling constants $J$ be as in \eqref{eq:isingweights}.
By monotonicity of the hyperbolic tangent and by compactness,
\begin{align} \label{eq:epsilonbound}
 \epsilon:=\sup_{\ue \in \FinG} \frac{x_{\ue}(\beta)}{x_{\ue}(1)} = 
\sup_{\ue \in \FinG} \frac{\tanh\beta J_{\ue}}{\tanh J_{\ue}} 
\leq \sup_{j \in [m,M]} \frac{\tanh\beta j}{\tanh j} <1,
\end{align}
where $\tanh m= \tan(k/2)$ and $\tanh M = \tan(K/2)$, with $k$ and $K$ as in~\eqref{eq:isoradialcond}.

Let $D$ be an isomorphism of $\ell^2$, which for each directed edge $\ve$ rescales the coordinate corresponding to $\ve$ by $\sqrt{x_{\ue}(\beta)}$.
Because of condition $\eqref{eq:isoradialcond}$, $D$ is bounded and has a bounded inverse $D^{-1}$.
Let $B =D^{-1} \Lambda D$, where $\Lambda$ is the transition matrix for $\FinG$ and the weight system $x(\beta)$. 
In the language of~\cite{Lis}, $B$ is a conjugated Kac-Ward transition matrix. We will use Corollary 3.2 from \cite{Lis}, which explicitly gives
the operator norm of a conjugated transition matrix.
To this end, note that the angles $\theta$ sum up to $\pi$ around each vertex of $\InfG$ (see Figure~\ref{fig2}).
Hence, from \eqref{eq:isingweights} and \eqref{eq:epsilonbound} it follows that 
\begin{align} 
\sum_{\ve \in \text{\textnormal{Out}}(z)}\arctan \big(x_{\ue}(\beta)/\epsilon\big)& \leq 
\sum_{\ve \in \text{\textnormal{Out}}(z)}\arctan \big(x_{\ue}(1)\big)  \\
&= \sum_{\ve \in \text{\textnormal{Out}}(z)} \theta_{\ue}/2  \leq \pi /2 \nonumber
\end{align}
for all vertices $z$. 
From the above inequality, Corollary 3.2 and Remark 1 from \cite{Lis}, it follows that the operator norm of $B$ is bounded from above
by $\epsilon$. The operator norm gives an upper bound on the spectral radius, and since $B$ has the same spectrum 
as $\Lambda$, the spectral radius of $\Lambda$ is not larger than $\epsilon$.
To compute the inverse Kac-Ward operator, we can therefore use the power series expansion \eqref{eq:powerseriesA} with $A = \Lambda$.
To get the non-backtracking walk representation, we compute the powers of $\Lambda$ using matrix multiplication and we use identity~\eqref{eq:stepfact}.
We also use the fact that convergence in norm is stronger than weak convergence and hence implies convergence of the entries of the 
corresponding matrices.

Furthermore, note that for all $\ve$ and~$\vg$
\begin{align*}
|\Lambda^n_{\ve,\vg}| &= |(D B^n D^{-1})_{\ve,\vg}| = |\langle D B^n D^{-1} i_{\vg},i_{\ve}\rangle| 
					\leq \| D B^n D^{-1} i_{\vg} \| \cdot \|i_{\ve}\| \\
					 & \leq \| D B^n D^{-1}  \| \leq  \|D\| \cdot \|D^{-1}\| \cdot \|B\|^n \leq C\epsilon^n,
\end{align*}
where we used the Cauchy-Schwarz inequality and submultiplicativity of the operator norm. 
Note that both $C$ and $\epsilon$ are universal for all subgraphs~$\FinG$, and moreover, \eqref{eq:epsilonbound} and \eqref{eq:isoradialcond} 
give explicit upper bounds on these constants. 
\end{proof}

\subsubsection*{Critical case} 
\begin{proof}[Proof of the non-backtracking walk expansion]
As we already mentioned, the power expansion formula \eqref{eq:powerseriesA} is valid 
whenever the spectral radius of the transition matrix is strictly smaller than one. 
We will now prove that this is the case if $\FinG$ is a finite subgraph of $\InfG$ and the weight system is critical. 

Our main tool will be Corollary 3.3 from \cite{Lis}, and therefore we need the following definition:
if $\dx$ is a system of weights on the directed edges of~$\FinG$, then we define $\xi^z(\dx)$ to be the
unique positive solution in $s$ of the equation 
\begin{align} \label{eq:solutiondef}
\sum_{\ve \in \text{\textnormal{Out}}(z)}\arctan \big(|\dx_{\ve}|^2/s\big)= \pi /2.
\end{align}
By Corollary 3.3 from \cite{Lis}, for the spectral radius of the critical transition matrix to be strictly smaller than one, 
it is enough to construct a weight system $\dx$ such that $\xi_z(\dx)<1$ for all vertices $z$, and
\begin{align} \label{eq:factorizes}
\dx_{\ve} \dx_{-\ve} =x_{\ue}(1)= \tan(\theta_{\ue}/2)\qquad \text{for all}\ \ve \in \dG. 
\end{align}

Let $V$ be the set of vertices of $\FinG$ and let $\partial \FinG \subset V$ be the set of vertices, whose degree in $\FinG$ is smaller than in $\InfG$.
Moreover, let $\partial_r\FinG \subset V$ denote the set of vertices whose graph distance to $\partial \FinG$ is at most $r$. 
We will inductively construct weight systems $\dx^r$, which satisfy \eqref{eq:factorizes}, and for which 
\begin{align} \label{eq:induction}
k_z(\dx^r) < 1\  \text{for all} \ z \in \partial_r\FinG, \quad \text{and} \quad k_z(\dx^r) = 1 \ \text{for all} \ z \in V \setminus \partial_r\FinG.
\end{align} 
Indeed, let $\dx_{\ve}^0= \sqrt{\tan(\theta_{\ue}/2)}$. The angles $\theta$ sum up to $\pi$ around each vertex in $\InfG$ and hence
$\xi_z(\dx^0) = 1$ for all $z \in V\setminus \partial \FinG$.
Since removing an edge incident to a vertex $z$ strictly decreases~$\xi_z$, $\xi_z(\dx^0) < 1$ for all $z \in \partial \FinG$.
Therefore, $\dx^0$ gives the basis of our induction. Now, assume that we already constructed an~$\dx^r$ which satisfies \eqref{eq:factorizes} and \eqref{eq:induction}.
If $\partial_r\FinG=V$, then $\dx = \dx^r$ yields the desired bound on the spectral radius. Otherwise, take any $z \in \partial_{r+1}\FinG \setminus \partial_r \FinG$ and any $w \in \partial_r \FinG$ 
at distance one from~$z$, i.e. such that $\ve=(w,z)\in \dG$. By the induction hypothesis, $\xi_z(\dx^r)=1$ and $\xi_w(\dx^r) < 1$.
By continuity, one can slightly increase $\dx^r_{\ve}$ so that still $\xi_w < 1$. To still satisfy \eqref{eq:factorizes}, 
the product over the two opposite orientations of $\ue$ has to remain constant,
and hence one has to slightly decrease $\dx^r_{-\ve}$ which results in $\xi_z <1$. The value of $\xi$ at other vertices does not change.
If we do this procedure for all $z \in \partial_{r+1}\FinG$, it means that we constructed $\dx^{r+1}$ which satisfies \eqref{eq:factorizes} and \eqref{eq:induction}.
We proceed until we cover all vertices of $\FinG$. Note that finiteness of $\FinG$ is crucial in this reasoning.
\end{proof}

\begin{proof}[The full Kac-Ward operator does not have a bounded inverse] Consider the \\* critical Kac-Ward operator on the full graph $\InfG$.
We already proved that it is not invertible when treated as an operator on the vector space $\C^{\vec{\InfG}}$ since constant
functions are in the kernel of $TS$, where $S$ is the projection operator from Section \ref{sec:isoradial}.

The idea is similar when $T$ is seen as a continuous operator on $\ell^2$. We will consider
elements of $\ell^2$ which ``approximate'' constant functions and show that their images under $TS$ are close to zero. 
To this end, let $f_{\FinH} \in \C^{\InfG}$ be the characteristic function of a finite graph $\FinH \subset \InfG$. 
By the definition of~$S$, $\varphi_{\FinH}:=S f_{\FinH} \in \ell^2$ and $\| \varphi_{\FinH} \| \geq \sin(k/2) \sqrt{|\FinH|}$, 
where $k$ is as in~\eqref{eq:isoradialcond}.
Note that~$f_{\FinH}$ is s-holomorphic at all interior vertices of $\FinH$ and $\InfG \setminus \FinH$. 
By Theorem~\ref{thm:KWshol}, $T \varphi_{\FinH}$ can be nonzero only at these directed edges, which point at the vertices of $\partial \FinH$, where $\partial \FinH$
is as in the previous proof.
From the definition of the Kac-Ward operator, it follows that 
\begin{align*}
\| T \varphi_{\FinH} \|_{\infty} \leq \Delta \| x(1) \|_{\infty} \|\varphi_{\FinH}\|_{\infty} \leq \tan (K/2) \Delta ,
\end{align*}
where $\Delta$ is the maximal degree of $\InfG$, and $K$ is as in~\eqref{eq:isoradialcond}. Hence, 
\begin{align*}
\| T \varphi_{\FinH} \| \leq \tan (K/2)  \Delta^{3/2} \sqrt{|\partial \FinH|},
\end{align*} 
which in the end yields 
\begin{align*}
\| T \varphi_{\FinH}\| \leq C \sqrt{|\partial \FinH| / | \FinH|} \|\varphi_{\FinH}\|,
\end{align*}
for some constant $C$ independent of $\FinH$. 

It is now enough to notice that $\InfG$ admits subgraphs for which the ratio $|\partial \FinH| / | \FinH|$ is arbitrarily small; 
it will mean that the inverse, if it exists, is unbounded in norm.
To this end, one can consider subgraphs $\FinH^r$, which are induced by the vertices of $\InfG$ contained in the square $[-r,r]\times[-r,r]$.
Using condition~\eqref{eq:isoradialcond}, which says that all edges of $\InfG$ are surrounded by disjoint rhombi of positive minimal area (and also
finite maximal area), one can prove that $|\partial \FinH^r|$ and $|\FinH^r|$ grow like $r$ and $r^2$ respectively when $r$ goes to infinity. 
\end{proof}

\textbf{Acknowledgments.} The author would like to thank Wouter Kager and Ronald Meester for introducing him to the combinatorial aspects of the Ising model
 and for useful remarks on the manuscript, and Federico Camia for the encouragement to write this paper. 
The research was supported by NWO grant Vidi 639.032.916.

\bibliographystyle{amsplain}
\bibliography{Ising}

\end{document}